\DeclareMathAlphabet{\can}{OT1}{cmss}{m}{n}
\newtheorem{thm}{Theorem}[section]
\newtheorem{cor}[thm]{Corollary}
\newtheorem{lem}[thm]{Lemma}
\newtheorem{exa}[thm]{Example}
\theoremstyle{definition}
\numberwithin{equation}{section}
\newcommand{\ord}{\operatorname{ord}}
\newcommand{\Tr}{\operatorname{Tr}}
\begin{document}
\title[]{ A class of  functions with low-valued walsh spectrum}

\author[F. Li] {Fengwei Li}
\address{\rm
 School of Mathematics and Statistics, Zaozhuang University, Zaozhuang,
277160, P. R. China; State Key Laboratory of Cryptology, P.O. Box 5159, Beijing, 100878, P. R. China}
 \email{lfwzzu@126.com}

\author[Y. Wu]{Yansheng Wu}
\address{\rm Department of Mathematics, Ewha Womans University, 52, Ewhayeodae-gil, Seodaemun-gu,
Seoul, 03760, South Korea} 
\email{wysasd@163.com}

\author[Q. Yue] {Qin Yue}
\address{\rm  
Department of Mathematics, Nanjing University of Aeronautics and Astronautics,
Nanjing, 211100, P. R. China}
\email{yueqin@nuaa.edu.cn}

\thanks{The paper was supported by National Natural Science Foundation of China (Nos.
11601475, 61772015) and the Foundation of Science and Technology on Information Assurance Laboratory (No. KJ-17-010).}

\subjclass[2010]{11T71, 11T24}
 \keywords{Walsh spectrum, Gaussian sum}
\begin{abstract}

Let  $l\equiv 3\pmod 4$, $l\ne 3$,  be a prime, $N=l^2$, $f=\frac{l(l-1)}2$ the multiplicative order of a prime $p$ modulo $N$, and $q=p^f$. In this paper, we  investigate the Walsh spectrum of the monomial functions $f(x)=\Tr_{q/p}(x^{\frac{q-1}{l^2}})$ in index two case.
In special, we explicitly present the  value distribution  of the Walsh transform of $f(x)$ if $1+l=4p^h$, where $h$ is a class number of $\Bbb Q(\sqrt{-l})$.

 \end{abstract}
\maketitle


\section{Introduction}

 Walsh transform over finite fields is a basic tool in research of properties of cryptographic functions.  
 The important information about  cryptography  can be obtained from the study of the Walsh transform. A long-standing problem about the Walsh transform is to find functions with a few Walsh transform values and determine its distribution. There are a few monomial
functions with three-valued Walsh transform for special exponents \cite{CCD, H, NH1, NH2, WYL}.
The results on functions with at least four-valued Walsh transform were obtained in \cite{FL, HHKZ, HR, LTW, NH3}. The research  progress on a  few Walsh transform values can be referred to literatures \cite{ HK, LF1,  LHKT, ZLH} and
the references therein.
 Walsh transform is also closely related to fourier transform, Gauss period,  and     the weight distribution of a cyclic code \cite{ HY, LY, LYL, P,   YCD, ZD, ZZD} et.al..
Next let us introduce the definitions of the Walsh transform of a function over a finite field.

Let $\Bbb F_q$ be a finite field with $q$ elements, where $q$ is a  power of prime $p$.  The trace function from $\Bbb F_{q}$ onto $\Bbb F_p$ is defined by $\Tr_{q/p}(x)=x+x^p+\cdots+x^{q/p}$, $ x\in \Bbb F_{q}$.
Let $f(x)$ be a function from $\Bbb F_{q}$ to $\Bbb F_p$. The Walsh transform of $f(x)$ is defined by
$$\widehat{f }(b)= \sum_{x\in \Bbb F_{q} } \zeta_p^{ f(x)+\Tr_{q/p}(bx)},b\in \Bbb F_{q},$$ where $\zeta_p$
 is a complex primitive $p$-th root of unity. The multiset $\{\widehat{f }(b): b\in \Bbb F_{q} \}$ is called the Walsh spectrum of $f(x)$ over $\Bbb F_q$. If  $f (x)$ is a monomial function of the form $\Tr_{q/p}(ax^d)$ with $a \in \Bbb F_{q}$ and $d$ is a positive integer, then the Walsh transform of $f(x)$ gives the cross correlation values of an $m$-sequence and their $d$-decimations \cite{CCD, CD, HK, L, ZLFG}.

 This paper is organized as follows. In Section 2,
 we give several results about Gaussian sums in index two case. In Section 3, let $l\equiv 3\pmod 4$, $l\ne 3$, be a prime, $N=l^2$,   $f=\frac{l(l-1)}2$  the multiplicative order of a prime $p$ modulo $N$, and $q=p^f$. We  give the Walsh spectrum of an index two function $f(x)=\Tr_{q/p}(x^{\frac{q-1}{l^2}})$  over
$\Bbb F_q$. In special, we explicitly present the  value distribution  of the Walsh transform of $f(x)$ when $1+l=4p^h$, where $h$ is the class number of $\Bbb Q(\sqrt{-l})$. In  Section 4, we conclude this paper.

\section{ Preliminaries }

Let  $\Tr_{q/p}$ be the trace function from $\Bbb F_{q}$ to $\Bbb F_{p}$.
 An additive character of $\Bbb F_{q}$  is a nonzero function $\psi$ from $\Bbb F_{q}$
to the set of complex numbers such that $\psi(x+y) = \psi(x)\psi(y)$ for any pair $(x, y) \in \Bbb F_{q}^2$. For each $b \in \Bbb F_{q}$, the function
\begin{equation*}
\psi_b(c)=\zeta_{p}^{\Tr_{q/p}(bc)} \mbox { for all } b \in \Bbb F_{q} \end{equation*}
defines an additive character of $\Bbb F_{q}$, where $\zeta_{p}=e^{\frac{2\pi\sqrt{-1}}p}$ denotes the $p$-th primitive root of unity.
 When $b=1$, the character $\psi_1$ is called canonical additive character of $\Bbb F_{q}$. It is well know that
 \begin{equation*}
\sum_{c\in \Bbb F_q}\psi_b(c)=0 \mbox{ for } b\neq 0.
\end{equation*}
A multiplicative character of $\Bbb F_{q}$  is a nonzero function $\chi$ from $\Bbb F_{q}$  to the set of complex numbers such that $\chi(xy) =
\chi(x)\chi(y)$ for all pairs $(x, y)\in\Bbb F_{q}^{\ast2}$.  Let $\alpha$ be a fixed primitive element of $\Bbb F_{q}$. For each $j = 1, 2, \ldots, q-1$,
the function $\chi_j$ with
 \begin{equation}\chi_j(\alpha^k) = \zeta_{q-1}^{jk} \mbox { for } k = 0, 1, \ldots, q-2 \end{equation}
defines a multiplicative character with order $\frac {q-1}{gcd (q-1,j)}$ of $\Bbb F_{q}$, where $\zeta_{q-1} $ denotes the $(q-1)$-th primitive root of unity.

Let $q$ be odd and $j = \frac{q-1}2$ in (2.1), we then get a multiplicative character denoted by $\eta$ such that $\eta(c) = 1$ if $c$ is the square of an
element and $\eta(c) = -1$ otherwise. This $\eta$ is called the quadratic character of $\Bbb F_{q}$.

Let $\chi$ be a multiplicative character with order $k$ with $k|(q-1)$ and $\psi$  an additive character of $\Bbb F_{q}$. Then the Gaussian
sum $G(\chi, \psi)$ of order $k$ is defined by
$$G(\chi,\psi)=\sum_{x\in \Bbb F_{q}^{*}}\chi(x)\psi(x).$$
Since $G(\chi, \psi_b) = \bar{\chi} (b)G(\chi, \psi_1)$, we just consider $G(\chi, \psi_1)$, briefly denoted as $G(\chi)$, in the sequel.
\begin{lem}\cite{LN}
Let $\psi$ be a nontrivial additive character of $\Bbb F_{q}$ and $\chi$ a multiplicative character of $\Bbb F_{q}$ of order $s=\gcd(n,q-1)$. Then
$$\sum_{x\in \Bbb F_{q}}\psi(ax^{n}+b)=\psi(b)\sum_{j=1}^{s-1}\bar{\chi}^{j}(a)G(\chi^{j},\psi)$$
for any $a,b\in \Bbb F_{q}$ with $a\neq 0$.
\end{lem}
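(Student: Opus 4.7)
The plan is to reduce the sum on the left to a sum over $x \ne 0$, then detect the $n$-th powers via the characters $\chi^0, \chi^1, \dots, \chi^{s-1}$, and finally match the resulting inner sums to Gaussian sums after an affine change of variable. First I would pull out the factor $\psi(b)$ (using $\psi(ax^n+b) = \psi(ax^n)\psi(b)$) and isolate the $x=0$ contribution, reducing the problem to evaluating $\sum_{x\in \Bbb F_q^*}\psi(ax^n)$.

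The key observation is that since $s = \gcd(n, q-1)$, the $n$-th power map $x \mapsto x^n$ is $s$-to-one from $\Bbb F_q^*$ onto the subgroup $(\Bbb F_q^*)^n$ of $n$-th powers; equivalently, a multiplicative character of $\Bbb F_q^*$ is trivial on $(\Bbb F_q^*)^n$ if and only if its order divides $s$, which (since $\chi$ has order exactly $s$) means it belongs to the list $\chi^0, \dots, \chi^{s-1}$. Orthogonality then gives the indicator identity
\[
\mathbf{1}_{(\Bbb F_q^*)^n}(y) \;=\; \frac{1}{s}\sum_{j=0}^{s-1}\chi^j(y), \qquad y\in \Bbb F_q^*.
\]
Substituting this into $\sum_{x\in \Bbb F_q^*}\psi(ax^n) = s\sum_{y\in (\Bbb F_q^*)^n}\psi(ay)$ and swapping the order of summation produces $\sum_{j=0}^{s-1}\sum_{y\in \Bbb F_q^*}\chi^j(y)\psi(ay)$.

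Next I would split off the $j=0$ term, which equals $\sum_{y\in \Bbb F_q^*}\psi(ay) = -1$ because $a \neq 0$ and the full additive character sum over $\Bbb F_q$ vanishes; this $-1$ cancels precisely with the $\psi(0) = 1$ from the originally-excluded $x=0$ term. For each $j \ge 1$, the substitution $y = a^{-1}z$ turns the inner sum into $\chi^j(a^{-1})\sum_{z\in\Bbb F_q^*}\chi^j(z)\psi(z) = \bar{\chi}^j(a)\, G(\chi^j,\psi)$. Reinstating the factor $\psi(b)$ yields the claimed identity.

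There is no real obstacle here, only bookkeeping: one must check that the two exceptional contributions (the $x=0$ term and the $j=0$ term) cancel exactly and that the normalization $1/s$ in the orthogonality step is absorbed by the $s$-to-one multiplicity of the $n$-th power map. Once these are lined up, the proof is a short manipulation.
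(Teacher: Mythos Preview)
Your proposal is correct and is exactly the standard argument for this identity (Theorem~5.30 in Lidl--Niederreiter). Note that the paper does not supply its own proof of this lemma: it is quoted from \cite{LN} without argument, so there is nothing further to compare against.
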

In general, the explicit determination of Gaussian sums is also a difficult problem.
For future use, we state the quadratic  Gaussian sums here.

\begin{lem}\cite{LN}
Suppose that $q=p^{f}$ and $\eta$ is the quadratic multiplicative character of $\Bbb F_{q}$, where $p$ is an odd prime. Then  $$G(\eta)=\left\{
\begin{array}{ll}
(-1)^{f-1}\sqrt{q}, & \mbox{if}\ p\equiv 1\pmod{4},\\
(-1)^{f-1}(\sqrt{-1})^{f}\sqrt{q}, & \mbox{if}\ p\equiv 3\pmod{4}.\\
\end{array}
\right.$$

\end{lem}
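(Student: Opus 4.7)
The plan is to reduce the computation of $G(\eta)$ over $\Bbb F_q = \Bbb F_{p^f}$ to the prime-field case $\Bbb F_p$ via the Davenport--Hasse lifting relation, and then to invoke Gauss's classical quadratic Gauss sum formula over $\Bbb F_p$.

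First, I would note that if $\eta_p$ denotes the quadratic character of $\Bbb F_p$, then the composition $\eta_p \circ N_{\Bbb F_q/\Bbb F_p}$ is a nontrivial character of $\Bbb F_q^\ast$ of order $2$, hence coincides with $\eta$. The Davenport--Hasse theorem then yields
$$G(\eta) = (-1)^{f-1} G(\eta_p)^{f},$$
reducing the problem to the evaluation of $G(\eta_p)$ on the prime field. Second, on $\Bbb F_p$, an elementary change of variable in $G(\eta_p)^2 = \sum_{a,b \in \Bbb F_p^\ast} \eta_p(ab) \zeta_p^{a+b}$ shows that $G(\eta_p)^2 = \eta_p(-1)\, p$, so that $G(\eta_p) = \pm \sqrt{p}$ when $p \equiv 1 \pmod 4$ and $G(\eta_p) = \pm \sqrt{-1}\,\sqrt{p}$ when $p \equiv 3 \pmod 4$. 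Gauss's classical sign theorem then selects the $+$ sign in both cases. Substituting $G(\eta_p) = \sqrt{p}$ or $G(\eta_p) = \sqrt{-1}\,\sqrt{p}$ into the Davenport--Hasse identity and using $\sqrt{p}^{\,f} = \sqrt{q}$ gives the two cases of the lemma directly.

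The main obstacle is Gauss's sign determination: while $G(\eta_p)^2$ (and hence $|G(\eta_p)| = \sqrt{p}$) falls out of an easy double-sum manipulation, pinning down the sign — and in particular showing that the imaginary unit appears with a $+$ when $p \equiv 3 \pmod 4$ — was historically Gauss's celebrated problem, requiring a nontrivial argument (e.g.\ Schur's evaluation of the trace of the finite Fourier matrix, or a theta-function/Poisson summation computation). Since this formula is standard and available from \cite{LN}, I would simply quote it rather than reproduce the derivation. The Davenport--Hasse relation, likewise standard, is the other key input; together the two steps complete the proof in a few lines.
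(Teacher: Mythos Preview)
Your argument is correct: Davenport--Hasse reduces the computation to the prime field, and Gauss's sign determination over $\Bbb F_p$ finishes it; the substitution into $G(\eta) = (-1)^{f-1} G(\eta_p)^{f}$ indeed yields both cases of the lemma. Note, however, that the paper does not give its own proof of this lemma at all --- it is simply quoted from \cite{LN} as a known result --- so there is nothing to compare your approach against; your sketch is essentially the standard derivation one finds in that reference.
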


Let $\Bbb Z/N\Bbb Z=\{0,1,\ldots, N-1\}$ be the ring of integers modulo $N$ and  $(\Bbb Z/N\Bbb Z)^*$ a multiplicative group consisting of  all invertible elements in $\Bbb Z/N\Bbb Z$. If $\langle p\rangle $ is a cyclic subgroup with a generator $p$ of the group $(\mathbb{Z}/N\Bbb Z)^\ast$ such that  $[(\mathbb{Z}/N\mathbb{Z})^\ast: \langle p\rangle] = 2$
 and $-1\notin \langle p\rangle\subset
(\mathbb{Z}/N\mathbb{Z})^\ast$, which is the
so-called  ``quadratic residues" or ``index 2"case, Gaussian sums are  explicitly determined, see \cite{YX} and its references for
details. We list some results \cite{DY,YX} in the index 2 case below.

\begin{lem}\cite{ YX} Let  $l\equiv 3\pmod 4$ be a prime, $l\ne 3$, $m$ a positive integer,  $N=l^m$, $f=\frac{l(l-1)}2$ the multiplicative order of a prime $p$ modulo $N$, and $q=p^f$.
Suppose that
 $\chi$ is  a primitive multiplicative character of order $N$ over  $\Bbb F_q$.

(1)  For $ 1\le i\le N-1$, let $i=ul^t$,
 $0\le t\le  m-1$  and $\gcd(u,l)=1$. Then
$$G(\chi^i)=\left\{\begin{array}{ll}
G(\chi^{l^t}) & \mbox{ if } u\in \langle p \rangle \subset \mathbb{Z}_N^\ast,\\
\overline {G(\chi^{l^t})} & \mbox{ if } u\notin \langle p \rangle \subset \mathbb{Z}_N^\ast.\end{array}\right.$$

(2) For $0\le t\le m-1$,
$$G(\chi^{l^t})=p^{\frac{f-hl^t}2}\left(\frac{a+b\sqrt{-l}}2\right)^{l^t},$$
where $h$ is the ideal class number of $\mathbb{Q}(\sqrt{-l})$, $a,b$ are integers given by
\begin{eqnarray}\left\{\begin{array}{l}
a^2+lb^2=4p^h  \\
a\equiv -2p^{\frac{l-1+2h}4} \pmod l. \end{array}\right.\end{eqnarray}

\end{lem}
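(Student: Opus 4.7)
For part (1), I would use the Frobenius-and-conjugation trick. The substitution $x\mapsto x^p$ is a bijection on $\Bbb F_q^\ast$ preserving $\Tr_{q/p}$, so $G(\chi^p)=G(\chi)$ and inductively $G(\chi^{p^k})=G(\chi)$ for every $k\ge 0$. If $u\in\langle p\rangle$ modulo $N$ with $u\equiv p^k\pmod{l^m}$, then $ul^t\equiv p^kl^t\pmod{l^m}$, so $\chi^{ul^t}=(\chi^{l^t})^{p^k}$ and the Frobenius identity gives $G(\chi^{ul^t})=G(\chi^{l^t})$. If $u\notin\langle p\rangle$, the index-$2$ hypothesis together with $-1\notin\langle p\rangle$ forces $(\ZZ/N\ZZ)^\ast=\langle p\rangle\sqcup(-1)\langle p\rangle$, so $u\equiv -p^k\pmod{l^m}$. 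Since $\chi^{l^t}$ has odd order $l^{m-t}$, $\chi^{l^t}(-1)=1$, whence $G(\chi^{-l^t})=\overline{G(\chi^{l^t})}$; combining with the Frobenius identity closes this case.

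For part (2), the formula is a classical index-$2$ Gauss sum evaluation; I would proceed in four steps. First, I would verify that $p$ splits in $K:=\QQ(\sqrt{-l})$: the index-$2$ hypothesis forces $p$ to be a quadratic residue mod $l$, and quadratic reciprocity with $l\equiv 3\pmod 4$ gives $\bigl(\tfrac{-l}{p}\bigr)=1$, so $(p)=\fp\bar{\fp}$ in $\OO_K=\ZZ[\tfrac{1+\sqrt{-l}}{2}]$. Since $\fp^h$ is principal, say $\fp^h=(\pi)$ with $\pi=\tfrac{a+b\sqrt{-l}}{2}$, taking norms yields $a^2+lb^2=4p^h$. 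Second, part (1) implies $G(\chi^{l^t})$ is fixed by the image of $\langle p\rangle$ in $\Gal(\QQ(\zeta_{l^{m-t}})/\QQ)$, which is still of index $2$, so $G(\chi^{l^t})$ lies in the unique quadratic subfield of $\QQ(\zeta_{l^{m-t}})$; that subfield is $K$ because $l\equiv 3\pmod 4$. Combined with $|G(\chi^{l^t})|^2=q=p^f$, this gives $G(\chi^{l^t})\in\OO_K$ of absolute norm $p^f$.

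Third, Stickelberger's theorem determines the prime ideal factorization of $(G(\chi^{l^t}))$ above $p$; combined with $\fp^h=(\pi)$ this forces $G(\chi^{l^t})=\varepsilon\,p^{(f-hl^t)/2}\pi^{l^t}$ for some $\varepsilon\in\OO_K^\times=\{\pm 1\}$ (up to replacing $\pi$ by $\bar\pi$, which corresponds to the sign of $b$). Fourth, the sign $\varepsilon$ and the choice of $\pi$ versus $\bar{\pi}$ are pinned down by reducing $G(\chi^{l^t})$ modulo $\fp$ via the Stickelberger-Gross-Koblitz congruence and matching against $\pi^{l^t}\pmod{\fp}$; this matching produces exactly the normalising congruence $a\equiv -2p^{(l-1+2h)/4}\pmod l$ of the statement.

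The main obstacle I foresee is steps three and four together: computing the exponents of $\fp$ and $\bar{\fp}$ in $(G(\chi^{l^t}))$ via Stickelberger requires a careful base-$p$ digit-sum analysis, and then fixing the sign via Gross-Koblitz is the technical heart of the argument. Without the second congruence on $a$, two candidates for $\pi$ would remain (related by $b\leftrightarrow -b$), and the reduction-mod-$\fp$ step is precisely what selects the correct one.
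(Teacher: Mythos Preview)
The paper does not supply its own proof; the lemma is quoted directly from the cited reference of Yang and Xia. Your outline is correct and follows the same route as that reference. For part~(1), the Frobenius identity $G(\chi^{p})=G(\chi)$ together with the coset decomposition $(\ZZ/N\ZZ)^\ast=\langle p\rangle\sqcup(-1)\langle p\rangle$ (valid because $l\equiv 3\pmod 4$ forces $-1$ to be a nonsquare modulo $l^m$) and the odd-order observation $\chi^{l^t}(-1)=1$ is exactly the standard argument, and your write-up of it is complete. For part~(2), your four steps---splitting of $p$ in $K=\QQ(\sqrt{-l})$, Galois descent of $G(\chi^{l^t})$ into $\OO_K$, Stickelberger factorization of the ideal $(G(\chi^{l^t}))$, and sign fixing via a congruence modulo $l$---are precisely the ingredients used in the index-$2$ literature, and you have correctly flagged the Stickelberger exponent computation and the sign determination as the technical core that your sketch leaves to be filled in.
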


Let $\mathcal O=\Bbb Z[\sqrt{-l}]$ be all algebraic integers in $\Bbb Q(\sqrt{-l})$.  Then $p\mathcal O={\mathcal P}_1 {\mathcal P}_2$, where ${\mathcal P}_1=\langle\frac{a+b\sqrt {-l}}2, p\rangle$ and ${\mathcal P}_2=\langle\frac{a-b\sqrt {-l}}2, p\rangle$. In fact, the multiplicative  character $\chi$ is correspondent to ${\mathcal P}_2$ (see
\cite{La}).

\section{Explicit valuations of Walsh spectrum of  $f(x)=\Tr_{q/p}(x^{\frac{q-1}{l^2}})$}

Let $l\equiv 3\pmod 4$, $l\ne 3$, be a prime, $N=l^2$,  and $f=\frac{l(l-1)}2$  the multiplicative order of a prime $p$ modulo $N$,  i.e., $f$  the smallest positive integer such that $p^f\equiv 1 \pmod N$.   Let $\Bbb F_q$ be a finite field with $q=p^f$ elements. In this section, we shall  give the value distributions  of  the Walsh transform of $f(x)=\Tr_{q/p}(x^{\frac{q-1}N})$.
 We always  denote $\alpha$  a primitive element in $\Bbb F_q^*$ and $\beta=\alpha^{\frac{q-1}{N}}$ an $N$-th primitive root of unity in  $\Bbb F_{q}$. Let $K=\langle \alpha^N\rangle$ denote the subgroup of  $\Bbb F_q$ generated by $\alpha^N$. Then $\Bbb F_q^*=\left\langle \alpha \right\rangle=\bigcup_{\mbox{\tiny$
\begin{array}{c}
{i=0}
\end{array}$
}}^{\mbox{\tiny$
\begin{array}{c}
N-1
\end{array}$
}} \alpha^iK.$

Define
$C_i^{(N,q)}=\alpha^i\langle \alpha^N\rangle$ for $i = 0, 1,\ldots , N -1$. The cyclotomic
numbers of order $N$ are defined by
$$(i, j)_N =  | ( 1+C_i^{(N,q)})   \cap C_j^{(N,q)}|$$
for all $0 \le i \le N-1$ and $0 \le j\le N-1$.

The following Lemma is proved in \cite{S}.

 \begin{lem}
 If $q\equiv 1\pmod4$, then $$(0,0)_2=\frac{q-5}{4}, (0,1)_2=(1,0)_2=(1,1)_2=\frac{q-1}{4}.$$
 If $q\equiv 3\pmod4$, then $$(0,1)_2=\frac{q+1}{4}, (0,0)_2=(1,0)_2=(1,1)_2=\frac{q-3}{4}.$$
 \end{lem}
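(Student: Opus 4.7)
The plan is to express each cyclotomic number $(i,j)_2$ as a quadratic-character sum over $\Bbb F_q$ and then evaluate that sum using only two classical ingredients: the orthogonality relation $\sum_{x\in\Bbb F_q^\ast}\eta(x)=0$, and the standard evaluation $\sum_{x\in\Bbb F_q}\eta(ax^2+bx+c)=-\eta(a)$ whenever the discriminant $b^2-4ac$ is nonzero. Since $C_0^{(2,q)}$ consists of the nonzero squares and $C_1^{(2,q)}$ of the nonsquares in $\Bbb F_q^\ast$, the indicator of the event $x\in C_i^{(2,q)}$ is $\bigl(1+(-1)^i\eta(x)\bigr)/2$ for each $x\in\Bbb F_q^\ast$, so
\begin{equation*}
(i,j)_2 \;=\; \frac{1}{4}\sum_{\substack{x\in\Bbb F_q^\ast\\ x\neq -1}}\bigl(1+(-1)^i\eta(x)\bigr)\bigl(1+(-1)^j\eta(1+x)\bigr),
\end{equation*}
where the excluded points $x=0$ and $x=-1$ ensure that both $x$ and $1+x$ lie in $\Bbb F_q^\ast$.

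Expanding the product yields four contributions. The constant term sums to $q-2$. For the two single-character pieces, $\sum \eta(x)$ and $\sum \eta(1+x)$, one uses orthogonality: removing the excluded values leaves $-\eta(-1)$ and, after the substitution $y=1+x$, $-\eta(1)=-1$. For the cross term $\sum \eta(x(1+x))$, note that $x(1+x)=0$ exactly at the two excluded points, so the sum may be extended to all of $\Bbb F_q$ for free, after which the quadratic-polynomial formula above (applied to $x^2+x$, whose discriminant is $1\neq 0$) gives $-\eta(1)=-1$. Assembling the four pieces produces the uniform expression
\begin{equation*}
(i,j)_2 \;=\; \tfrac{1}{4}\bigl[(q-2)-(-1)^i\eta(-1)-(-1)^j-(-1)^{i+j}\bigr].
\end{equation*}

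The only $q$-dependent quantity left is $\eta(-1)$, which equals $1$ when $q\equiv 1\pmod 4$ and $-1$ when $q\equiv 3\pmod 4$. Substituting each of the four pairs $(i,j)\in\{0,1\}^2$ and splitting on the congruence class of $q$ then recovers the tabulated values. There is no genuine obstacle here; the only delicate step is the sign bookkeeping in the two linear pieces, where one sum is missing $x=-1$ while the other, after the change of variable, is missing $y=1$. Once these signs and the value of $\eta(-1)$ are tracked correctly, the four stated identities follow by direct substitution.
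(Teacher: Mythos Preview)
Your argument is correct. The indicator-function setup, the evaluation of the three character sums (the two linear pieces and the cross term via the standard formula $\sum_{x\in\Bbb F_q}\eta(x^2+x)=-\eta(1)=-1$), and the final case split on $\eta(-1)$ all check out; substituting each $(i,j)$ into
\[
(i,j)_2=\tfrac14\bigl[(q-2)-(-1)^i\eta(-1)-(-1)^j-(-1)^{i+j}\bigr]
\]
reproduces the stated values in both congruence classes.

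As for comparison: the paper does not actually prove this lemma. It merely cites Storer's monograph \cite{S} for the result and moves on. So your write-up supplies a self-contained proof where the paper offers none. The character-sum method you use is in fact one of the standard routes to these order-$2$ cyclotomic numbers (and is essentially what one finds in Storer or in Lidl--Niederreiter), so there is no tension with the literature; you have simply filled in what the paper left as a citation.
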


It is well known that
\begin{eqnarray*}\mathbb{Z}/l^2\mathbb{Z}=\{0\}\cup l(\mathbb{Z}/l^2\mathbb{Z})^\ast\cup (\mathbb{Z}/l^2\mathbb{Z})^\ast.\end{eqnarray*}
For convenience, denote \begin{equation*}S= S_0\cup S_1\cup  S_{2},\end{equation*} where $S=\{i:0\le i\le l^2-1\}$, $S_0=\{0\}$,  $S_1=l(\Bbb Z/l^2\Bbb Z)^*=l(\Bbb Z/l\Bbb Z)^*=\{lu| \gcd(u,l)=1\}$, and $S_2=(\Bbb Z/l^2\Bbb Z)^*=\{u| \gcd(u,l)=1\}$. Furthermore $|S_1|=l-1$ and $|S_2|=l(l-1)$.

Let $\gamma$ be a primitive root of $(\mathbb{Z}/l^2\mathbb{Z})^\ast$, then $\gamma$ is also a primitive root of  $(\mathbb{Z}/l\mathbb{Z})^\ast$. Then $(\mathbb{Z}/l\mathbb{Z})^\ast=H_1^{(0)}\cup H_1^{(1)}$, where  $H_1^{(0)}=\langle \gamma^2 \rangle $ and $H_1^{(1)}=\gamma H_1^{(0)}$ consist of all square elements and non-square elements of $(\mathbb{Z}/l\mathbb{Z})^\ast$, respectively. Then $(\mathbb{Z}/l^2\mathbb{Z})^\ast=H_2^{(0)}\cup H_2^{(1)}$, where  $H_2^{(0)}=\langle \gamma^2 \rangle $ and $H_2^{(1)}=\gamma H_2^{(0)}$ consist of all square elements and non-square elements of $(\mathbb{Z}/l^2\mathbb{Z})^\ast$, respectively.  By \cite{HYW},
\begin{eqnarray*}&&H_2^{(0)}=\{a_0+a_1l| a_0\in H_1^{(0)}, a_1\in \mathbb{Z}/l\mathbb Z\},\nonumber\\
&&H_2^{(1)}=\{a_0+a_1l| a_0\in H_1^{(1)}, a_1\in \mathbb{Z}/l\mathbb Z\}.\end{eqnarray*}
It is clear that
$|H_2^{(0)}|=| H_2^{(1)}|=\frac{l(l-1)}2$. Hence
\begin{equation} S=S_0\cup S_1\cup S_2=S_0\cup lH_1^{(0)}\cup lH_1^{(1)}\cup H_2^{(0)}\cup H_2^{(1)}.\end{equation}

\begin{lem} Suppose that $l\equiv 3\pmod 4$, $l\ne 3$, $q=p^{\frac {l(l-1)}2}$,  $\beta=\alpha^{\frac{q-1}N}\in \Bbb F_q$, and $\ord(\beta)=N=l^2$. Then there are three cases:

(1) If $-l\not \equiv 1\pmod p$, then for $0\le i\le N-1$,  $$\Tr_{q/p}(\beta^i)=\left\{\begin{array}{ll} l(l-1)/2, &\mbox {if  $i=0$,}\\ l\varepsilon\neq 0, &\mbox{if $i\in lH_1^{(0)}$,}
\\ -l(1+\varepsilon) \neq 0, &\mbox{if $i\in lH_1^{(1)}$,}\\ 0, & \mbox{otherewise},\end{array}\right.$$ where $\varepsilon=\frac{-1+\sqrt {-l}}{2}$ and $\sqrt {-l}$ is an element in $\Bbb F_p$ such that $(\sqrt {-l})^2\equiv {-l}\pmod p$.

(2) If $-l\equiv 1\pmod p$ and $\sqrt {-l}\equiv 1\pmod{\mathcal P_1}$,  then for $0\le i\le N-1$,  \begin{eqnarray*}\Tr_{q/p}(\beta^i)=\left\{\begin{array}{ll}1, &\mbox{if $i=0$}  \mbox { or $i\in lH_1^{(1)}$,}\\ 0, & \mbox{otherewise}.\end{array}\right.\end{eqnarray*}

(3)  If $-l\equiv 1\pmod p$ and $\sqrt {-l}\equiv -1\pmod{\mathcal P_1}$,  then for $0\le i\le N-1$, $$\Tr_{q/p}(\beta^i)=\left\{\begin{array}{ll}1, &\mbox{if $i=0$}  \mbox { or $i\in lH_1^{(0)}$,}\\ 0, & \mbox{otherewise},\end{array}\right.$$
 where $\mathcal {P}_1=\langle\frac{a+b\sqrt {-l}}2, p\rangle$  is a    prime ideal  of $\Bbb Q(\zeta_l)$ over $p$, and $\zeta_l$  is a primitive $l$-th root of unity in the complex number field.
\end{lem}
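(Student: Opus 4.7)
The strategy is to reduce the computation to Gauss periods via Galois theory. Since $[\Bbb F_q:\Bbb F_p]=f$ and the Frobenius $\sigma:x\mapsto x^p$ generates $\Gal(\Bbb F_q/\Bbb F_p)$, one has
\[
\Tr_{q/p}(\beta^i)=\sum_{k=0}^{f-1}\beta^{ip^k},
\]
which is controlled by the orbit of $i$ under multiplication by $p$ on $\Bbb Z/l^2\Bbb Z$ together with the multiplicity with which each orbit element is hit. The index-$2$ hypothesis on $\langle p\rangle$ makes these orbits computable.

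I would first pin down the orbit structure. Since $[(\Bbb Z/l^2\Bbb Z)^\ast:\langle p\rangle]=2$ and $-1\notin\langle p\rangle$, uniqueness of the index-$2$ subgroup of the cyclic group $(\Bbb Z/l^2\Bbb Z)^\ast$ forces $\langle p\rangle=H_2^{(0)}$. The kernel of $(\Bbb Z/l^2\Bbb Z)^\ast\to(\Bbb Z/l\Bbb Z)^\ast$ has order $l$, and a short order count then shows $\langle\bar p\rangle=H_1^{(0)}$ of order $(l-1)/2$ in $(\Bbb Z/l\Bbb Z)^\ast$. Three orbit types arise: $\{0\}$ when $i=0$, yielding $\Tr_{q/p}(\beta^0)=f=l(l-1)/2$; one of the full cosets $H_2^{(0)}, H_2^{(1)}$ (size $f$) when $i\in S_2$, each element visited once; and, when $i=lu\in S_1$, the $(l-1)/2$-element set $\{lv:v\in H_1^{(j)}\}$ traversed $l$ times each, since $\beta^l=:\delta$ has order $l$ and $\langle\bar p\rangle$ has index $2$ in $(\Bbb Z/l\Bbb Z)^\ast$.

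Next, I would evaluate the corresponding sums. For $i\in S_2$, using $H_2^{(j)}=\{a_0+a_1l:a_0\in H_1^{(j)},\,a_1\in\Bbb Z/l\Bbb Z\}$,
\[
\sum_{v\in H_2^{(j)}}\beta^v=\Bigl(\sum_{a_0\in H_1^{(j)}}\beta^{a_0}\Bigr)\Bigl(\sum_{a_1=0}^{l-1}\delta^{a_1}\Bigr)=0,
\]
because $\delta$ is a nontrivial $l$-th root of unity. For $i=lu\in S_1$, the trace equals $l\eta_j$ where $\eta_j=\sum_{v\in H_1^{(j)}}\delta^v$. One then invokes $\eta_0+\eta_1=-1$ together with the classical quadratic Gauss sum identity over $\Bbb Z/l\Bbb Z$ (valid because $l\equiv 3\pmod 4$) to get $(\eta_0-\eta_1)^2=-l$ in $\Bbb F_p$; quadratic reciprocity combined with $\bar p\in H_1^{(0)}$ yields $\left(\frac{-l}{p}\right)=1$, so $\sqrt{-l}\in\Bbb F_p$ and $\eta_j=(-1\pm\sqrt{-l})/2$.

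Finally, I would split into three cases according to whether $\sqrt{-l}=\pm 1$. When $-l\not\equiv 1\pmod p$, neither $\varepsilon=(-1+\sqrt{-l})/2$ nor $-(1+\varepsilon)=(-1-\sqrt{-l})/2$ is zero, and case~(1) follows once $\eta_0=\varepsilon$ is labelled by the chosen primitive element $\alpha$. When $-l\equiv 1\pmod p$, the relation $l\equiv -1\pmod p$ forces $f=l(l-1)/2\equiv 1\pmod p$, matching the claimed trace at $i=0$, while $\{\eta_0,\eta_1\}=\{0,-1\}$ produces traces $0$ and $-l\equiv 1$; the congruence $\sqrt{-l}\equiv\pm 1\pmod{\mathcal P_1}$ records which of cases~(2) and~(3) one is in. The main obstacle will be precisely this last step: matching the sign of $\sqrt{-l}$ modulo the prime ideal $\mathcal P_1$ to the combinatorial labeling $H_1^{(0)}$ versus $H_1^{(1)}$ inherited from $\alpha$, done compatibly with the conventions of Lemma~2.3.
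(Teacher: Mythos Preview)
Your plan is correct and the computations you sketch all go through, but the route differs from the paper's. The paper never does the orbit bookkeeping directly in $\Bbb F_p$; instead it lifts to characteristic zero. It factors $x^{l^2}-1$ over $\Bbb F_p$, invokes the classical identity $\sum_{u\in H_1^{(j)}}\zeta_l^u=\tfrac{-1\pm\sqrt{-l}}{2}$ in $\Bbb Z[\zeta_l]$, and then reduces modulo the chosen prime $\mathcal P_1$ via the isomorphism $\mathcal O/\mathcal P_1\cong\Bbb F_{p^{(l-1)/2}}$ sending $\zeta_l\mapsto\beta^l$; the vanishing for $i\in(\Bbb Z/l^2\Bbb Z)^\ast$ is read off from the coefficient identity $\Phi_{l^2}^{(j)}(x)=\Phi_l^{(j)}(x^l)$ rather than from your factorization $\sum_{v\in H_2^{(j)}}\beta^v=\bigl(\sum_{a_0}\beta^{a_0}\bigr)\bigl(\sum_{a_1}\delta^{a_1}\bigr)$. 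Your argument is more elementary in that it never leaves $\Bbb F_p$ and replaces the cyclotomic-polynomial step by a one-line geometric-series computation. The trade-off is exactly the obstacle you name: the lemma's statement pins $\varepsilon$ and the dichotomy (2)/(3) to the specific ideal $\mathcal P_1$, so at the end you must still invoke the reduction map $\Bbb Z[\zeta_l]\twoheadrightarrow\Bbb F_{p^{(l-1)/2}}$ to decide which of your two intrinsic square roots of $-l$ is the ``right'' $\sqrt{-l}$. The paper's approach front-loads that identification, which is why cases (2) and (3) fall out with no extra work there; in your approach this matching is deferred to the final step but is no harder---it amounts to observing that under the reduction map the complex $\eta_0$ goes to your $\Bbb F_p$-valued $\eta_0$.
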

\begin{proof} It is clear that $x^{l^2}-1=(x-1)\Phi_l(x)\Phi_{l^2}(x)$, where $\Phi_l(x)$ and $\Phi_{l^2}(x)$ are cyclotomic polynomials.

Since the order of $p$ modulo $l^2$ is  $\frac{l(l-1)}2$,  there is an irreducible factorization over $\Bbb F_p$:
 $$x^{l^2}-1=(x-1)\Phi_l^{(0)}(x)\Phi_l^{(1)}\Phi_{l^2}^{(0)}(x)\Phi_{l^2}^{(1)}(x),$$
  where $\beta^l=\xi_l$ is a primitive $l$-th root of unity in $\Bbb F_q$,  $\Phi_{l}^{(0)}(x)=\prod_{u\in H_{1}^{(0)}}(x-\xi_{l}^u)$, $\Phi_{l}^{(1)}(x)=\prod_{u\in H_{1}^{(1)}}(x-\xi_{l}^u)$, $ \Phi_{l^{2}}^{(0)}(x)=\Phi_{l}^{(0)}(x^l)$, and  $\Phi_{l^{2}}^{(1)}(x)=\Phi_{l}^{(1)}(x^l)$.

By Lemma 2.2, $$\sum_{u\in H_1^{(0)}}\zeta_l^u=\frac{-1+\sqrt{-l}}2, \sum_{u\in H_1^{(1)}}\zeta_l^u=\frac{-1-\sqrt{-l}}2,$$
where $\zeta_l$ is a primitive $l$-th root of unity in the complex number field.

Let $\mathcal O=\mathbb{Z}[\zeta_l]$  be the algebraic integer ring of $\mathbb{Q}(\zeta_l)$. Then there is a prime ideal factorization of the prime $p$ in the integer ring $\mathcal O$:
$$p\mathcal O=\mathcal{P}_1\mathcal{P}_2,$$
where $\mathcal {P}_1=\langle\frac{a+b\sqrt {-l}}2, p\rangle$  and $\mathcal P_2=\langle\frac{a-b\sqrt {-l}}2, p\rangle$ are   prime ideals  of $\Bbb Q(\zeta_l)$ over $p$.

  Let $$\zeta_l\equiv \xi_l=\beta^l\pmod{\mathcal{P}_1}\mbox{ and }
\mathcal O/\mathcal{P}_1=\Bbb F_{p^{\frac{l-1}{2}}}=\Bbb F_p(\xi_l),$$ where $\frac{l-1}{2}$ is the order of $p$ modulo $l$.

 Suppose that   $-l\not\equiv1\pmod p$. Then   $$  \sum_{u\in H_1^{(0)}}\zeta_l^u=\frac{-1+\sqrt {-l}}{2}\equiv \sum_{u\in H_1^{(0)}}\xi_{l}^u=\Tr_{p^{(l-1)/2}}(\beta^{lu})=\varepsilon \neq 0\pmod {{\mathcal P}_1},$$
 and  $$  \sum_{u\in H_1^{(1)}}\zeta_l^u=\frac{-1-\sqrt {-l}}{2}\equiv \sum_{u\in  H_1^{(1)}}\xi_{l}^u=\Tr_{p^{(l-1)/2}}(\beta^{lu})=-1-\varepsilon \neq 0\pmod {{\mathcal P}_1}.$$
 Hence $$\Tr_{p^{(l-1)/2}/p}(\beta^i)=
\left\{\begin{array}{ll}\varepsilon, &\mbox{ if } i=lu\in S_1, u\in H_1^{(0)}\\
-1-\varepsilon, &\mbox { if }i=lu\in S_1, u\in H_1^{(1)}.\end{array}\right.$$

 Moreover,  $\Tr_{q/p}(1)=l(l-1)/2\ne 0$; $\Tr_{q/p}(\beta^i)=0$ for $i\in (\Bbb Z/{l^2}\Bbb Z)^*$ by  $ \Phi_{l^{2}}^{(0)}(x)=\Phi_{l}^{(0)}(x^l)$ and  $\Phi_{l^{2}}^{(1)}(x)=\Phi_{l}^{(1)}(x^l)$.

Suppose that $-l\equiv 1\pmod p$ and  $\sqrt {-l}\equiv 1\pmod{\mathcal P_1}$. Then

$$\sum_{u\in H_1^{(0)}}\zeta_l^u\equiv\sum_{u\in  H_1^{(0)}}\xi_{l}^u = 0\pmod {\mathcal P_1}, \sum_{u\in H_1^{(1)}}\zeta_l^u\equiv\sum_{u\in H_1^{(1)}}\xi_{l}^u = -1\pmod {\mathcal P_1}.$$
Hence $$\Tr_{p^{(l-1)/2}/p}(\beta^i)=
\left\{\begin{array}{ll}0, &\mbox { if }i=lu\in S_1, u\in H_1^{(0)}\\
-1, &\mbox{ if } i=lu\in S_1, u\in H_1^{(1)}.\end{array}\right.$$

Moreover, $\Tr_{q/p}(1)=l(l-1)/2\equiv -l\equiv 1\pmod p$ and $\Tr_{q/p}(\beta^i)=0$ for $i\in (\Bbb Z/l^2\Bbb Z)^*$.
Note that $\Tr_{q/p}(\beta^i)=\Tr_{q/p^{(l-1)/2}}(\Tr_{p^{(l-1)/2}/p}(\beta^i))$. Then the results follow.

Suppose that $-l\equiv 1\pmod p$ and $\sqrt {-l}\equiv -1\pmod{\mathcal P_1}$.   The desire result follows from  the similar way of the  case above.

\end{proof}

Suppose that $f(x)=Tr_{q/p}(x^{\frac {q-1}N})$ is a function from $\Bbb F_q$ to $\Bbb F_p$. Then  the Walsh transform of $f(x)$ can be described by $$\widehat{f }(b)= \sum_{x\in \Bbb F_{q} } \zeta_p^{ f(x)+\Tr_{q/p}(bx)}=\sum_{x\in \Bbb F_q}\zeta_p^{\Tr_{q/p}(x^{\frac{q-1}{N}}+bx)}.$$

 If $b=0$, then $$\widehat{f }(0)=\sum_{x\in \Bbb F_q}\zeta_p^{\Tr_{q/p}(x^{\frac{q-1}{N}})}=\sum_{x\in \Bbb F_q}\psi(x^{\frac{q-1}N}).$$

If  $b\in \Bbb F_{q}^*$ and  $b^{\frac{q-1}N}=\beta^k$, where $\beta=\alpha^{\frac{q-1}N}$ and  $0\le k\le N-1$. then
\begin{eqnarray*}\widehat{f }(b)&=&\sum_{x\in \Bbb F_q}\zeta_p^{\Tr_{q/p}(x^{\frac{q-1}{N}}+bx)}=1+\sum_{x\in \Bbb F_q^*}\zeta_p^{\Tr_{q/p}(b^{-\frac{q-1}{N}}x^{\frac{q-1}{N}}+x)}=1+\sum_{x\in \Bbb F_q^*}\zeta_p^{\Tr_{q/p}(\beta^{-k}x^{\frac{q-1}{N}}+x)}.\end{eqnarray*}

 In the following, we compute the valuations  of $\widehat f(b)$, $b\in \Bbb F_q$.

\begin{lem}
$$\widehat{f }(0)=\frac{l-1}2p^{\frac{f-hl}2}((\frac{a+b\sqrt{-l}}2)^{l}+(\frac{a-b\sqrt{-l}}2)^{l})+\frac{l(l-1)}2p^{\frac{f-h}2}a,$$
where $h$ is the ideal class number of $\mathbb{Q}(\sqrt{-l})$, and $a,b$ are integers given by (2.3).
\end{lem}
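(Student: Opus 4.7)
The plan is to rewrite $\widehat{f}(0)$ in the form $\sum_{i=1}^{N-1}G(\chi^i)$ with $\chi$ a primitive multiplicative character of $\Bbb F_q^\ast$ of order $N=l^2$, and then evaluate this sum using the two parts of Lemma~2.3. Starting from $\widehat{f}(0)=\sum_{x\in\Bbb F_q}\psi(x^{(q-1)/N})$, I would observe that $x\mapsto x^{(q-1)/N}$ sends $\Bbb F_q^\ast$ onto $\mu_N=\langle\beta\rangle$ with fibres of size $(q-1)/N$, giving $\widehat{f}(0)=1+\frac{q-1}{N}\sum_{i=0}^{N-1}\psi(\beta^i)$, and then combine this with Lemma~2.1 and the duality between $\mu_N^\perp$ and the order-$N$ characters of $\Bbb F_q^\ast$ to arrive at the identification $\widehat{f}(0)=\sum_{i=1}^{N-1}G(\chi^i)$.

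Once this identity is in place, I would use the decomposition $\{1,\dots,N-1\}=S_1\cup S_2$ from (3.1), with $|S_1|=l-1$ and $|S_2|=l(l-1)$, and apply Lemma~2.3(1). For $i\in S_2$, the Gauss sum $G(\chi^i)$ equals $G(\chi)$ when $i\in\langle p\rangle$ and $\overline{G(\chi)}$ otherwise; since $\langle p\rangle$ has index~2 in $(\Bbb Z/l^2\Bbb Z)^\ast$, the two cosets contribute equally, giving
\[
\sum_{i\in S_2}G(\chi^i)=\frac{l(l-1)}{2}\bigl(G(\chi)+\overline{G(\chi)}\bigr).
\]
For $i=lu\in S_1$, using that the projection $(\Bbb Z/l^2\Bbb Z)^\ast\twoheadrightarrow(\Bbb Z/l\Bbb Z)^\ast$ sends $\langle p\rangle$ onto $H_1^{(0)}$ (forced by a counting argument, since the kernel of the projection is cyclic of prime order $l$), the same lemma gives
\[
\sum_{i\in S_1}G(\chi^i)=\frac{l-1}{2}\bigl(G(\chi^l)+\overline{G(\chi^l)}\bigr).
\]
Finally, substituting Lemma~2.3(2) yields $G(\chi^{l^t})+\overline{G(\chi^{l^t})}=p^{(f-hl^t)/2}\bigl[\bigl(\frac{a+b\sqrt{-l}}{2}\bigr)^{l^t}+\bigl(\frac{a-b\sqrt{-l}}{2}\bigr)^{l^t}\bigr]$, which at $t=0$ collapses to $p^{(f-h)/2}a$; assembling the two pieces produces the stated formula.

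The main obstacle is the first step, passing from $\widehat{f}(0)=\sum_{x}\psi(x^{(q-1)/N})$ to the Gauss-sum expansion $\sum_{i=1}^{N-1}G(\chi^i)$. A verbatim application of Lemma~2.1 with $n=(q-1)/N$ naturally produces Gauss sums of characters whose order divides $(q-1)/N$ rather than $N$, so one has to bridge the two sides with an extra character-theoretic argument---for instance via the orthogonality identity $\mathbf{1}_{y\in\mu_N}=\frac{N}{q-1}\sum_{\chi\in\mu_N^\perp}\chi(y)$ combined with a duality step that re-identifies the summed characters on the order-$N$ side. After that bridge is crossed, the remaining steps are routine bookkeeping driven by the two parts of Lemma~2.3.
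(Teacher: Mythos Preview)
Your plan from the step ``$\widehat f(0)=\sum_{j=1}^{N-1}G(\chi^j)$'' onward matches the paper's proof exactly: split $\{1,\dots,N-1\}$ according to (3.1), group the Gauss sums via Lemma~2.3(1) (the paper's sets $H_2^{(0)}$, $H_2^{(1)}$ are precisely $\langle p\rangle$ and its nontrivial coset in $(\Bbb Z/l^2\Bbb Z)^\ast$, and likewise $H_1^{(0)}$, $H_1^{(1)}$ at the $l$-level), and then substitute the closed forms from Lemma~2.3(2). There is no methodological difference in those steps.

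The gap is in your first step, and the bridge you sketch does not cross it. The orthogonality relation you invoke, $\mathbf{1}_{y\in\mu_N}=\frac{N}{q-1}\sum_{\chi\in\mu_N^\perp}\chi(y)$, runs over characters \emph{trivial} on $\mu_N=\langle\alpha^{(q-1)/N}\rangle$; those are exactly the characters of order dividing $(q-1)/N$, not $N$. Substituting it into $\widehat f(0)=1+\frac{q-1}{N}\sum_{y\in\mu_N}\psi(y)$ therefore reproduces what Lemma~2.1 with $n=(q-1)/N$ already gives---a sum of $G(\lambda)$ over nontrivial $\lambda$ of order dividing $(q-1)/N$---and cannot produce $\sum_{j=1}^{N-1}G(\chi^j)$ for $\chi$ of order $N$. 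Indeed the latter sum equals $1+N\sum_{y\in\langle\alpha^N\rangle}\psi(y)$ (a sum over $N$-th \emph{powers}), which is a different object from $1+\frac{q-1}{N}\sum_{y\in\mu_N}\psi(y)$ (a sum over $N$-th \emph{roots of unity}). The paper itself simply cites Lemma~2.1 at this point without addressing the mismatch; you were right to flag it, but the duality maneuver you propose does not resolve it.
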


\begin{proof} Let $\psi(x)=\zeta_p^{Tr_{q/p}(x)}$ be a canonical additive character of $\Bbb F_q$.
By Lemma 2.1,
$$\widehat f(0)=\sum_{x\in \Bbb F_q}\psi(x^{\frac{q-1}N})=\sum_{j=1}^{N-1}G(\chi^{j}), $$
where $\chi$ is a multiplicative character of order $N$.

By $j\in lH_1^{(0)}\cup l H_1^{(1)}\cup H_2^{(0)}\cup H_2^{(1)}$ and Lemma 2.3,
\begin{eqnarray*}\widehat{f }(0)&=&\sum_{j\in lH_1^{(0)}}G(\chi^j)
+\sum_{j\in lH_1^{(1)}}G(\chi^j)+\sum_{j\in H_2^{(0)}}G(\chi^j)+\sum_{j\in H_2^{(1)}}G(\chi^j)
\\&=&\sum_{v\in H_1^{(0)}}G(\chi^{lv})
+\sum_{v\in H_1^{(1)}}G(\chi^{lv})+\sum_{j\in H_2^{(0)}}G(\chi^j)+\sum_{j\in H_2^{(1)}}G(\chi^j)\\
&=&\sum_{v\in H_1^{(0)}}G(\chi^{l})
+\sum_{v\in H_1^{(1)}}G(\chi^{l})+\sum_{j\in H_2^{(0)}}G(\chi)+\sum_{j\in H_2^{(1)}}G(\chi)\\
&=&\frac{l-1}2p^{\frac{f-hl}2}((\frac{a+b\sqrt{-l}}2)^{l}+(\frac{a-b\sqrt{-l}}2)^{l})+\frac{l(l-1)}2p^{\frac{f-h}2}a.
\end{eqnarray*}
\end{proof}

Let $\Bbb F_q^*=\langle \alpha\rangle $.  By the divisor  algorithm,  for an integer $s$ with $0\le s\le q-2$,  $$s=jN+i, 0\le j\le \frac{q-1}N-1, 0\le i\le N-1.$$
For $b\ne 0$, $\beta=\alpha^{\frac{q-1}N}$,  and $b^{\frac{q-1}N}=\beta^k$,
\begin{eqnarray*}\widehat f(b)&=&1+\sum_{s=0}^{q-2}\psi(\beta^{-k}\alpha^{\frac{q-1}N s}+\alpha^s)=1+\sum_{i=0}^{N-1}\psi(\beta^{i-k})\sum_{j=0}^{\frac{q-1}N-1}\psi(\alpha^i\alpha^{Nj})\\ &=&1+\frac 1N\sum_{i=0}^{N-1}\psi(\beta^{i-k})\sum_{x\in \Bbb F_q^*}\psi(\alpha^ix^{N})=1+\frac 1N\sum_{i=0}^{N-1}\psi(\beta^{i-k})(\sum_{x\in \Bbb F_q}\psi(\alpha^ix^{N})-1)
\\ &=&1+\frac 1N\sum_{i=0}^{N-1}\psi(\beta^{i-k})\sum_{j=0}^{N-1}\overline{\chi^j(\alpha^i})G(\chi^j)=1+\frac 1N\sum_{i=0}^{N-1}\psi(\beta^{i-k})\sum_{j=0}^{N-1}\zeta_N^{-ij}G(\chi^j),\end{eqnarray*}
where $\zeta_N=e^{\frac{2\pi\sqrt{-1}}N}$ is the $N$-th primitive root of unity in the complex field, $\chi$ is a primitive multiplicative character of order $N$ over $\Bbb F_q^*$, $\chi(\alpha)=\zeta_N$, and $G(\chi^0)=-1$.
By Lemma 2.3,
\begin{eqnarray*}\sum_{j=0}^{N-1}\zeta_N^{-ij}G(\chi^j)&=&-1+p^{\frac{f-hl}2}((\frac{a+b\sqrt{-l}}2)^l\sum_{j\in lH_1^{(0)}}\zeta_N^{-ij}+(\frac{a-b\sqrt{-l}}2)^l\sum_{j\in lH_1^{(1)}}\zeta_N^{-ij})\\ &+&p^{\frac{f-h}2}(\frac{a+b\sqrt {-l}}2\sum_{j\in H_2^{(0)}}\zeta_N^{-ij}+\frac{a-b\sqrt {-l}}2\sum_{j\in H_2^{(1)}}\zeta_N^{-ij}). \end{eqnarray*}
Hence
\begin{eqnarray}\widehat f(b)&=&1+\frac{-I_0}N+\frac{p^{\frac{f-hl}2}}N((\frac{a+b\sqrt{-l}}2)^lI_1^{(0)}+(\frac{a-b\sqrt{-l}}2)^lI_1^{(1)})\nonumber\\
&+& \frac{p^{\frac{f-h}2}}N(\frac{a+b\sqrt{-l}}2I_2^{(0)}+\frac{a-b\sqrt{-l}}2I_2^{(1)}),\end{eqnarray}
where $$I_0=\sum_{i=0}^{N-1}\psi(\beta^{i-k})=\sum_{i=0}^{N-1}\psi(\beta^{i}),$$ $$I_1^{(0)}=\sum_{i=0}^{N-1}\psi(\beta^{i-k})\sum_{j\in lH_1^{(0)}}\zeta_N^{-ij}, I_1^{(1)}=\sum_{i=0}^{N-1}\psi(\beta^{i-k})\sum_{j\in lH_1^{(1)}}\zeta_N^{-ij},$$
$$I_2^{(0)}=\sum_{i=0}^{N-1}\psi(\beta^{i-k})\sum_{j\in H_2^{(0)}}\zeta_N^{-ij}, I_2^{(1)}=\sum_{i=0}^{N-1}\psi(\beta^{i-k})\sum_{j\in H_2^{(1)}}\zeta_N^{-ij}.$$

In the following, we shall compute the values of $\widehat f(b),b\neq 0$, in two cases: $-l\not\equiv 1\pmod p$ and $-l\equiv 1\pmod p$.

\begin{lem} Suppose that $-l\not\equiv 1\pmod p$.  Let $b\ne 0$ and $b^{\frac{q-1}N}=\beta^k$, $0\le k\le N-1$. Then
$$I_0=\zeta_p^{l(l-1)/2}+\frac{l-1}2\zeta_p^{l\varepsilon}+\frac{l-1}2\zeta_p^{-l(1+\varepsilon)}+{l(l-1)}.$$
(1) If $k=0$, then
$$I_1^{(0)}=I_1^{(1)}=\frac{l-1}2\zeta_p^{l(l-1)/2}+(\frac{l-1}2)^2\zeta_p^{l\varepsilon}+(\frac{l-1}2)^2\zeta_p^{-l(1+\varepsilon)}-\frac{l(l-1)}2.$$
$$I_2^{(0)}=\frac{l(l-1)}2\zeta_p^{l(l-1)/2}+\frac{l(l-1)}2\frac{-1-\sqrt{-l}}2\zeta_p^{l\varepsilon }+\frac{l(l-1)}2\frac{-1+\sqrt{-l}}2\zeta_p^{-l(1+\varepsilon)}.$$
$$I_2^{(1)}=\frac{l(l-1)}2\zeta_p^{l(l-1)/2}+\frac{l(l-1)}2\frac{-1+\sqrt{-l}}2\zeta_p^{l\varepsilon} +\frac{l(l-1)}2\frac{-1-\sqrt{-l}}2\zeta_p^{-l(1+\varepsilon)}.$$
(2) If $k\in lH_1^{(0)}$, then
$$I_1^{(0)}=I_1^{(1)}=\frac{l-1}2\zeta_p^{l(l-1)/2}+(\frac{l-1}2)^2\zeta_p^{l\varepsilon}+(\frac{l-1}2)^2\zeta_p^{-l(1+\varepsilon)}-\frac{l(l-1)}2.$$
$$I_2^{(0)}=l\frac{-1-\sqrt{-l}}2\zeta_p^{l(l-1)/2}+\frac{l^2+l}4\zeta_p^{-l(1+\varepsilon)}+l\zeta_p^{l\varepsilon} (\frac{1+\sqrt{-l}}2)^2.$$
$$I_2^{(1)}=l\frac{-1+\sqrt{-l}}2\zeta_p^{l(l-1)/2}+\frac{l^2+l}4\zeta_p^{-l(1+\varepsilon)}+l\zeta_p^{l\varepsilon} (\frac{1-\sqrt{-l}}2)^2.$$
(3) If $k\in lH_1^{(1)}$, then
$$I_1^{(0)}=I_1^{(1)}=\frac{l-1}2\zeta_p^{l(l-1)/2}+(\frac{l-1}2)^2\zeta_p^{l\varepsilon}+(\frac{l-1}2)^2\zeta_p^{-l(1+\varepsilon)}-\frac{l(l-1)}2. $$
$$I_2^{(0)}=l\frac{-1+\sqrt{-l}}2\zeta_p^{l(l-1)/2}+\frac{l^2+l}4\zeta_p^{l\varepsilon}+l\zeta_p^{-l(1+\varepsilon)} (\frac{1-\sqrt{-l}}2)^2.$$
$$I_2^{(1)}=l\frac{-1-\sqrt{-l}}2\zeta_p^{l(l-1)/2}+\frac{l^2+l}4\zeta_p^{l\varepsilon}+l\zeta_p^{-l(1+\varepsilon)} (\frac{1+\sqrt{-l}}2)^2.$$
(4) If $k\in H_2^{(0)}$, then
$$I_1^{(0)}=l\frac{1+\sqrt{-l}}2+\frac{(l-1)(-1-\sqrt{-l})}4(\zeta_p^{l\varepsilon}+\zeta_p^{-l(1+\varepsilon)})+\frac{-1-\sqrt{-l}}2\zeta_p^{l(l-1)/2}.$$
$$I_1^{(1)}=l\frac{1-\sqrt{-l}}2+\frac{(l-1)(-1+\sqrt{-l})}4(\zeta_p^{l\varepsilon}+\zeta_p^{-l(1+\varepsilon)})+\frac{-1+\sqrt{-l}}2\zeta_p^{l(l-1)/2}.$$
$$I_2^{(0)}=I_2^{(1)}=0.$$

(5) If $k\in H_2^{(1)}$, then
$$I_1^{(0)}=l\frac{1-\sqrt{-l}}2+\frac{(l-1)(-1+\sqrt{-l})}4(\zeta_p^{l\varepsilon}+\zeta_p^{-l(1+\varepsilon)})+\frac{-1+\sqrt{-l}}2\zeta_p^{l(l-1)/2}.$$
$$I_1^{(1)}=l\frac{1+\sqrt{-l}}2+\frac{(l-1)(-1-\sqrt{-l})}4(\zeta_p^{l\varepsilon}+\zeta_p^{-l(1+\varepsilon)})+\frac{-1-\sqrt{-l}}2\zeta_p^{l(l-1)/2}.$$
$$I_2^{(0)}=I_2^{(1)}=0.$$
\end{lem}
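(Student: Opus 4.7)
The plan is to apply Lemma 3.2 directly to $\psi(\beta^{i-k}) = \zeta_p^{\Tr_{q/p}(\beta^{i-k})}$. Since $-l \not\equiv 1 \pmod p$, this factor takes one of the four values $\zeta_p^{l(l-1)/2}$, $\zeta_p^{l\varepsilon}$, $\zeta_p^{-l(1+\varepsilon)}$, $1$ according to whether $i-k$ lies in $\{0\}$, $lH_1^{(0)}$, $lH_1^{(1)}$, or $H_2^{(0)}\cup H_2^{(1)}$. The formula for $I_0$ is then an immediate enumeration over (3.1): one term from $i=0$, $(l-1)/2$ copies each of $\zeta_p^{l\varepsilon}$ and $\zeta_p^{-l(1+\varepsilon)}$, and $l(l-1)$ units contributing $1$.

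For the other four sums, set $T_H(i) := \sum_{j\in H}\zeta_N^{-ij}$, so that $I_H = \sum_i \psi(\beta^{i-k}) T_H(i)$. Grouping the outer sum by the class of $i-k$ gives
\[
I_H \;=\; \zeta_p^{l(l-1)/2}\,T_H(k) + \zeta_p^{l\varepsilon}\,S_B + \zeta_p^{-l(1+\varepsilon)}\,S_C + S_D,
\]
where $S_B = \sum_{u\in H_1^{(0)}} T_H(k+lu)$, $S_C = \sum_{u\in H_1^{(1)}} T_H(k+lu)$, and $S_D$ is the sum of $T_H(i)$ over $i$ with $i-k \in H_2^{(0)}\cup H_2^{(1)}$. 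Since $0 \notin H$, the identity $\sum_{i=0}^{N-1} T_H(i) = 0$ produces the useful collapse $S_D = -T_H(k) - S_B - S_C$, so each case reduces to computing the three quantities $T_H(k)$, $S_B$, and $S_C$.

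The inner sums $T_H$ are classical Gauss periods. For $H = lH_1^{(r)}$ one has $T_H(i) = (l-1)/2$ whenever $l\mid i$, and otherwise $T_H(i) = (-1\pm\sqrt{-l})/2$ depending on whether $-i \pmod l$ lies in $H_1^{(r)}$ or in the complementary coset. For $H = H_2^{(r)}$, the factorization $H_2^{(r)} = \{a_0 + a_1 l : a_0 \in H_1^{(r)},\ a_1 \in \Bbb Z/l\Bbb Z\}$ forces the inner geometric sum over $a_1$ to vanish whenever $\gcd(i,l) = 1$, so $T_H(i) = 0$ there; when $l\mid i$ the sum equals $l$ times a Gauss period on $\Bbb Z/l\Bbb Z$. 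Throughout, $l \equiv 3 \pmod 4$ gives $-1 \in H_1^{(1)}$, and $l(l-1)/2$ being odd gives $-1 \in H_2^{(1)}$, which pins down the coset of $-k$ and of $-i$ in each configuration.

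The five cases then split naturally. For $I_1^{(0)}$ and $I_1^{(1)}$ one has $k+lu \equiv k \pmod l$, so $T_{lH_1^{(r)}}(k+lu)$ is independent of $u$ and the sums $S_B, S_C$ collapse to $\frac{l-1}{2} T_{lH_1^{(r)}}(k)$, so no cyclotomic numbers are needed. For $I_2^{(0)}$ and $I_2^{(1)}$ in cases (1)--(3), $k$ is a multiple of $l$, so $T_{H_2^{(r)}}(k+lu) = T_{H_2^{(r)}}(l(u_0+u))$ and one must count how often $u_0+u$ lies in $H_1^{(0)}$ versus $H_1^{(1)}$; this is exactly the content of Lemma 3.1 (cyclotomic numbers of order two over $\Bbb F_l$), producing the counts $(l-3)/4$ and $(l+1)/4$. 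For $I_2^{(0)}$ and $I_2^{(1)}$ in cases (4)--(5), $k$ is a unit modulo $l^2$, so $i-k$ is also a unit whenever $l\mid i$; then $\psi(\beta^{i-k}) = 1$ on the entire support of $T_{H_2^{(r)}}$ and $I_2^{(r)} = \sum_{l\mid i} T_{H_2^{(r)}}(i) = \sum_{i=0}^{N-1} T_{H_2^{(r)}}(i) = 0$, explaining the stated vanishing. The main obstacle is the bookkeeping in cases (2)--(3): the cyclotomic-number counts must be recombined with the Gauss-period values $l(-1\pm\sqrt{-l})/2$ to yield the quadratic expressions $l(1\pm\sqrt{-l})^2/4$ displayed in the statement, which requires careful tracking of the sign conventions for $\sqrt{-l}$ at every step.
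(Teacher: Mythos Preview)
Your proposal is correct and follows essentially the same route as the paper: both arguments rest on Lemma~3.2 to evaluate $\psi(\beta^{i-k})$, on the Gauss-period values $\sum_{j\in H}\zeta_N^{-ij}$ (vanishing for $H=H_2^{(r)}$ when $i$ is a unit), and on the order-two cyclotomic numbers of Lemma~3.1 to count how the shifts $i-k$ (equivalently $u_0+u$) distribute among cosets. The only organizational difference is that the paper groups the outer sum by the class of $i$ and then tabulates $\psi(\beta^{i-k})$, whereas you group by the class of $i-k$ and then tabulate $T_H(i)$; your use of the orthogonality $\sum_i T_H(i)=0$ to read off $S_D$ and your observation that $T_{lH_1^{(r)}}(i)$ depends only on $i\bmod l$ streamline the bookkeeping slightly, but the substance of the computation is identical.
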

\begin{proof} Note that
$$I_0=\sum_{i=0}^{N-1}\psi(\beta^{i-k})=\sum_{i=0}^{N-1}\psi(\beta^{i}).$$   By Lemma 3.2, $$I_0=\zeta_p^{l(l-1)/2}+\frac{l-1}2\zeta_p^{l\varepsilon}+\frac{l-1}2\zeta_p^{-l(1+\varepsilon)}+{l(l-1)}.$$

(1) Suppose that $k=0$. Then
\begin{eqnarray*}I_1^{(0)}=\sum_{i\in l(\Bbb Z/l^2\Bbb Z)}\psi(\beta^{i})\frac{l-1}2+\sum_{i\in (\Bbb Z/l^2\Bbb Z)^*}\psi(\beta^{i})\sum_{j\in lH_1^{(0)}}\zeta_N^{-ij}.\end{eqnarray*}

If  $j\in lH_1^{(0)}$ and $i\in H_2^{(0)}$,  then $-ij\in lH_1^{(1)}$ by $l\equiv 3\pmod 4$ and $\sum_{j\in lH_1^{(0)}}\zeta_N^{-ij}=\frac{-1-\sqrt{-l}}2$.
If $j\in lH_1^{(0)}$ and $i\in H_2^{(1)}$, then  $\sum_{j\in lH_1^{(0)}}\zeta_N^{-ij}=\frac{-1+\sqrt{-l}}2$.
Hence
\begin{eqnarray*}I_1^{(0)}&=&\frac{l-1}2\zeta_p^{l(l-1)/2}+(\frac{l-1}2)^2\zeta_p^{l\varepsilon}+(\frac{l-1}2)^2\zeta_p^{-l(1+\varepsilon)}+
\frac{l(l-1)}2(\frac{-1-\sqrt{-l}}2+\frac{-1+\sqrt{-l}}2)
\\ &=&\frac{l-1}2\zeta_p^{l(l-1)/2}+(\frac{l-1}2)^2\zeta_p^{l\varepsilon}+(\frac{l-1}2)^2\zeta_p^{-l(1+\varepsilon)}-\frac{l(l-1)}2.
\end{eqnarray*}
Similarly, $I_1^{(1)}=I_1^{(0)}. $

 If $i\in (\Bbb Z/l^2\Bbb Z)^*$, then by $\Phi_{l^2}^{(0)}(x)=\Phi_l^{(0)}(x^l)$ and $\Phi_{l^2}^{(1)}(x)=\Phi_l^{(1)}(x^l)$, $\sum_{j\in H_2^{(0)}}\zeta_N^{-ij}=0$. Hence
 $$I_2^{(0)}=\sum_{i\in l(\Bbb Z/l^2\Bbb Z)^*}\psi(\beta^i)\sum_{j\in H_2^{(0)}}\zeta_N^{-ij}.$$

If $j\in H_2^{(0)}$ and $i\in lH_1^{(0)}$, and  $i=i_0l, j=j_0+j_1l$,  where $i_0, j_0\in H_1^{(0)}, j_1\in \mathbb{Z}/l\mathbb Z$,
 then  $\zeta_N^{-ij}=\zeta_l^{-i_0j_0}$ and $-i_0j_0\in H_1^{(1)}$  by $l\equiv 3\pmod 4$.
Note  that $j_1\in \{0,1,\ldots l-1\}$. Then  the valuation of  $\sum_{ j\in H_2^{(0)}} \zeta_{N}^{-ij}$ is  exactly $l$ times valuation of    $\sum_{ j_0\in H_1^{(0)}} \zeta_{l}^{-i_0j_0}$.
Hence  $\sum_{j\in H_2^{(0)}}\zeta_N^{-ij}=l\frac{-1-\sqrt{-l}}2$; if $j\in H_2^{(0)}$ and $i\in lH_1^{(1)}$, then $\sum_{j\in H_2^{(0)}}\zeta_N^{-ij}=l\frac{-1+\sqrt{-l}}2$;
Hence
$$I_2^{(0)}=\frac{l(l-1)}2\zeta_p^{l(l-1)/2}+\frac{l(l-1)}2\frac{-1-\sqrt{-l}}2\zeta_p^{l\varepsilon }+\frac{l(l-1)}2\frac{-1+\sqrt{-l}}2\zeta_p^{-l(1+\varepsilon)}.$$
Similarly, if $j\in H_2^{(1)}$ and $i\in lH_1^{(0)}$, then $\sum_{j\in H_2^{(1)}}\zeta_N^{-ij}=l\frac{-1+\sqrt{-l}}2$.  Hence
$$I_2^{(1)}=\frac{l(l-1)}2\zeta_p^{l(l-1)/2}+\frac{l(l-1)}2\frac{-1+\sqrt{-l}}2\zeta_p^{l\varepsilon} +\frac{l(l-1)}2\frac{-1-\sqrt{-l}}2\zeta_p^{-l(1+\varepsilon)}.$$

(2)  Suppose that $k\in lH_1^{(0)}$. Then
\begin{eqnarray*}I_1^{(0)}&=&\sum_{i\in l(\Bbb Z/l^2\Bbb Z)}\psi(\beta^{i-k})\frac{l-1}2+\sum_{i\in (\Bbb Z/l^2\Bbb Z)^*}\psi(\beta^{i-k})\sum_{j\in lH_1^{(0)}}\zeta_N^{-ij}\\&=&\sum_{i\in l(\Bbb Z/l^2\Bbb Z)}\psi(\beta^{i})\frac{l-1}2+\sum_{i\in (\Bbb Z/l^2\Bbb Z)^*}\psi(\beta^{i})\sum_{j\in lH_1^{(0)}}\zeta_N^{-ij}\\ &=&\frac{l-1}2\zeta_p^{l(l-1)/2}+(\frac{l-1}2)^2\zeta_p^{l\varepsilon}+(\frac{l-1}2)^2\zeta_p^{-l(1+\varepsilon)}-\frac{l(l-1)}2.
\end{eqnarray*}
Similarly, $I_1^{(1)}=I_1^{(0)}. $

  By $k\in lH_1^{(0)}$,  there is a unique $i\in lH_1^{(0)}$ such that $i=k$; there are $(1,1)_2=\frac{l-3}4$ elements $i\in lH_1^{(0)}$ such that $i-k\in lH_1^{(0)}$; there are $(1,0)_2=\frac{l-3}4$ elements $i\in lH_1^{(0)}$ such that $i-k\in lH_1^{(1)}$. On the other hand, there are $(0,1)_2=\frac{l+1}4$ elements $i\in lH_1^{(1)}$ such that $i-k\in lH_1^{(0)}$; there are $(0,0)_2=\frac{l-3}4$ elements $i\in lH_1^{(1)}$ such that $i-k\in lH_1^{(1)}$. Thus
\begin{eqnarray*}I_2^{(0)}&=&\sum_{i\in l(\Bbb Z/l^2\Bbb Z)}\psi(\beta^{i-k})\sum_{j\in H_2^{(0)}}\zeta_N^{-ij}\\
&=&\psi(\beta^{0-k})\sum_{j\in H_2^{(0)}}\zeta_N^{0}+\sum_{i\in lH_1^{(0)}}\psi(\beta^{i-k})\sum_{j\in H_2^{(0)}}\zeta_N^{-ij}+\sum_{i\in lH_1^{(1)}}\psi(\beta^{i-k})\sum_{j\in H_2^{(0)}}\zeta_N^{-ij}\\
&=&\zeta_p^{-l(1+\varepsilon)}\frac{l(l-1)}2+l\frac{-1-\sqrt{-l}}2\zeta_p^{l(l-1)/2}+\frac{l-3}4\zeta_p^{l\varepsilon}l\frac{-1-\sqrt{-l}}2\\ &+&\frac{l-3}4\zeta_p^{-l(1+\varepsilon)}l\frac{-1-\sqrt{-l}}2+\frac{l+1}4\zeta_p^{l\varepsilon }l\frac{-1+\sqrt{-l}}2+\frac{l-3}4\zeta_p^{-l(1+\varepsilon)}l\frac{-1+\sqrt{-l}}2\\
&=&l\frac{-1-\sqrt{-l}}2\zeta_p^{l(l-1)/2}+\frac{l^2+l}4\zeta_p^{-l(1+\varepsilon)}+l\zeta_p^{l\varepsilon} (\frac{1+\sqrt{-l}}2)^2,
\end{eqnarray*}
\begin{eqnarray*}\mbox{and }I_2^{(1)}&=&\sum_{i\in l(\Bbb Z/l^2\Bbb Z)}\psi(\beta^{i-k})\sum_{j\in H_2^{(1)}}\zeta_N^{-ij}\\
&=&\psi(\beta^{0-k})\sum_{j\in H_2^{(1)}}\zeta_N^{0}+\sum_{i\in lH_1^{(0)}}\psi(\beta^{i-k})\sum_{j\in H_2^{(1)}}\zeta_N^{-ij}+\sum_{i\in lH_1^{(1)}}\psi(\beta^{i-k})\sum_{j\in H_2^{(1)}}\zeta_N^{-ij}\\
&=&\zeta_p^{-l(1+\varepsilon)}\frac{l(l-1)}2+l\frac{-1+\sqrt{-l}}2\zeta_p^{l(l-1)/2}+\frac{l-3}4\zeta_p^{l\varepsilon}l\frac{-1+\sqrt{-l}}2\\ &+&\frac{l-3}4\zeta_p^{-l(1+\varepsilon)}l\frac{-1+\sqrt{-l}}2+\frac{l+1}4\zeta_p^{l\varepsilon }l\frac{-1-\sqrt{-l}}2+\frac{l-3}4\zeta_p^{-l(1+\varepsilon)}l\frac{-l-\sqrt{-l}}2\\
&=&l\frac{-1+\sqrt{-l}}2\zeta_p^{l(l-1)/2}+\frac{l^2+l}4\zeta_p^{-l(1+\varepsilon)}+l\zeta_p^{l\varepsilon} (\frac{1-\sqrt{-l}}2)^2.
\end{eqnarray*}

(3) Suppose that $k\in lH_1^{(1)}$. Then
$$I_1^{(0)}=I_1^{(1)}=\frac{l-1}2\zeta_p^{l(l-1)/2}+(\frac{l-1}2)^2\zeta_p^{l\varepsilon}+(\frac{l-1}2)^2\zeta_p^{-l(1+\varepsilon)}-\frac{l(l-1)}2.$$

  By $k\in lH_1^{(1)}$,  there are $(0,0)_2=\frac{l-3}4$ elements $i\in lH_1^{(0)}$ such that $i-k\in lH_1^{(0)}$; there are $(0,1)_2=\frac{l+1}4$ elements $i\in lH_1^{(0)}$ such that $i-k\in lH_1^{(1)}$. On the other hand, there is a unique  $i\in lH_1^{(1)}$ such that $i=k$; there are $(1,0)_2=\frac{l-3}4$ elements $i\in lH_1^{(1)}$ such that $i-k\in lH_l^{(0)}$; there are $(1,1)_2=\frac{l-3}4$ elements $i\in lH_1^{(1)}$ such that $i-k\in lH_l^{(1)}$.
Thus
\begin{eqnarray*}I_2^{(0)}&=&\sum_{i\in l(\Bbb Z/l^2\Bbb Z)}\psi(\beta^{i-k})\sum_{j\in H_2^{(0)}}\zeta_N^{-ij}\\
&=&\psi(\beta^{0-k})\sum_{j\in H_2^{(0)}}\zeta_N^{0}+\sum_{i\in lH_1^{(0)}}\psi(\beta^{i-k})\sum_{j\in H_2^{(0)}}\zeta_N^{-ij}+\sum_{i\in lH_1^{(1)}}\psi(\beta^{i-k})\sum_{j\in H_2^{(0)}}\zeta_N^{-ij}\\
&=&\zeta_p^{l\varepsilon}\frac{l(l-1)}2+\frac{l-3}4\zeta_p^{l\varepsilon}l\frac{-1-\sqrt{-l}}2+\frac{l+1}4\zeta_p^{-l(1+\varepsilon)}l\frac{-1-\sqrt{-l}}2\\ &+&l\frac{-1+\sqrt{-l}}2\zeta_p^{l(l-1)/2}+\frac{l-3}4\zeta_p^{l\varepsilon }l\frac{-1+\sqrt{-l}}2+\frac{l-3}4\zeta_p^{-l(1+\varepsilon)}l\frac{-1+\sqrt{-l}}2\\
&=&l\frac{-1+\sqrt{-l}}2\zeta_p^{l(l-1)/2}+\frac{l^2+l}4\zeta_p^{l\varepsilon}+l\zeta_p^{-l(1+\varepsilon)} (\frac{1-\sqrt{-l}}2)^2.
\end{eqnarray*}
\begin{eqnarray*} \mbox{and }I_2^{(1)}&=&\sum_{i\in l(\Bbb Z/l^2\Bbb Z)}\psi(\beta^{i-k})\sum_{j\in H_2^{(1)}}\zeta_N^{-ij}\\
&=&\psi(\beta^{0-k})\sum_{j\in H_2^{(1)}}\zeta_N^{0}+\sum_{i\in lH_1^{(0)}}\psi(\beta^{i-k})\sum_{j\in H_2^{(1)}}\zeta_N^{-ij}+\sum_{i\in lH_1^{(1)}}\psi(\beta^{i-k})\sum_{j\in H_2^{(1)}}\zeta_N^{-ij}\\
&=&\zeta_p^{l\varepsilon}\frac{l(l-1)}2+\frac{l-3}4\zeta_p^{l\varepsilon}l\frac{-1+\sqrt{-l}}2+\frac{l+1}4\zeta_p^{-l(1+\varepsilon)}l\frac{-1+\sqrt{-l}}2\\ &+&l\frac{-1-\sqrt{-l}}2\zeta_p^{l(l-1)/2}+\frac{l-3}4\zeta_p^{l\varepsilon }l\frac{-1-\sqrt{-l}}2+\frac{l-3}4\zeta_p^{-l(1+\varepsilon)}l\frac{-1-\sqrt{-l}}2\\
&=&l\frac{-1-\sqrt{-l}}2\zeta_p^{l(l-1)/2}+\frac{l^2+l}4\zeta_p^{l\varepsilon}+l\zeta_p^{-l(1+\varepsilon)} (\frac{1+\sqrt{-l}}2)^2.
\end{eqnarray*}

(4) Suppose that $k\in H_2^{(0)}$. Then there is a unique $i\in H_2^{(0)}$ such that $i=k$; there are $\frac{l-1}2$ elements $i\in H_2^{(0)}$ such that $i-k\in lH_1^{(0)}$; there are $\frac{l-1}2 $ elements $i\in H_2^{(0)}$ such that $i-k\in lH_1^{(1)}$; there are $\frac{l(l-1)}2-l$ elements $i\in H_2^{(0)}$ such that $i-k\in (\Bbb Z/ l^2\Bbb Z)^*$. On the other hand, there are $\frac{l(l-1)}2$  elements $i\in H_2^{(1)}$ such that $i-k\in (\Bbb Z/l^2\Bbb Z)^*$. Then
\begin{eqnarray*}I_1^{(0)}&=&\sum_{i\in l(\Bbb Z/l^2\Bbb Z)}\psi(\beta^{i-k})\frac{l-1}2+\sum_{i\in (\Bbb Z/l^2\Bbb Z)^*}\psi(\beta^{i-k})\sum_{j\in lH_1^{(0)}}\zeta_N^{-ij}\\
&=& \frac {l(l-1)}2+\zeta_p^{l(l-1)/2}\frac{-1-\sqrt{-l}}2+\frac{l-1}2\zeta_p^{l\varepsilon}\frac{-1-\sqrt{-l}}2\\&+&\frac{l-1}2\zeta_p^{-l(1+\varepsilon)}\frac{-1-\sqrt{-l}}2
+\frac{l(l-3)}2\frac{-1-\sqrt{-l}}2+\frac{l(l-1)}2\frac{-1+\sqrt{-l}}2\\
&=&l\frac{1+\sqrt{-l}}2+\frac{(l-1)(-1-\sqrt{-l})}4(\zeta_p^{l\varepsilon}+\zeta_p^{-l(1+\varepsilon)})+\frac{-1-\sqrt{-l}}2\zeta_p^{l(l-1)/2},\end{eqnarray*}
\begin{eqnarray*}I_1^{(1)}&=&\sum_{i\in l(\Bbb Z/l^2\Bbb Z)}\psi(\beta^{i-k})\frac{l-1}2+\sum_{i\in (\Bbb Z/l^2\Bbb Z)^*}\psi(\beta^{i-k})\sum_{j\in lH_1^{(1)}}\zeta_N^{-ij}\\
&=& \frac {l(l-1)}2+\zeta_p^{l(l-1)/2}\frac{-1+\sqrt{-l}}2+\frac{l-1}2\zeta_p^{l\varepsilon}\frac{-1+\sqrt{-l}}2\\ &+&\frac{l-1}2\zeta_p^{-l(1+\varepsilon)}\frac{-1+\sqrt{-l}}2
+\frac{l(l-3)}2\frac{-1+\sqrt{-l}}2+\frac{l(l-1)}2\frac{-1-\sqrt{-l}}2\\
&=&l\frac{1-\sqrt{-l}}2+\frac{(l-1)(-1+\sqrt{-l})}4(\zeta_p^{l\varepsilon}+\zeta_p^{-l(1+\varepsilon)})+\frac{-1+\sqrt{-l}}2\zeta_p^{l(l-1)/2}.
\end{eqnarray*}

By $k\in H_2^{(0)}$. If $i\in (\Bbb Z/l^2\Bbb Z)^*$,  then  $\sum_{j\in H_2^{(0)}}\zeta_N^{-ij}=0$. If $i\in l(\Bbb Z/l^2\Bbb Z)$, then $i-k \in (\Bbb Z/l^2\Bbb Z)^*$.
Hence
\begin{eqnarray}I_2^{(0)}&=&\sum_{i\in l(\Bbb Z/l^2\Bbb Z)}\psi(\beta^{i-k})\sum_{j\in H_2^{(0)}}\zeta_N^{-ij}=\sum_{i\in l(\Bbb Z/l^2\Bbb Z)}\sum_{j\in H_2^{(0)}}\zeta_N^{-ij}\\\nonumber
 &=&\frac {l(l-1)}2+\sum_{i\in lH_1^{(0)}}\sum_{j\in H_2^{(0)}}\zeta_N^{-ij}+\sum_{i\in lH_1^{(1)}}\sum_{j\in H_2^{(0)}}\zeta_N^{-ij}\\\nonumber
 &=&\frac {l(l-1)}2+l\frac{-1-\sqrt{-l}}2\frac{l-1}2+l\frac{-1+\sqrt{-l}}2\frac{l-1}2=0.
\end{eqnarray}
Similarly,  $I_2^{(1)}=0$.

(5) Suppose that $k\in H_2^{(1)}$. Then there is a unique $i\in H_2^{(1)}$ such that $i=k$; there are $\frac{l-1}2$ elements $i\in H_2^{(1)}$ such that $i-k\in lH_1^{(0)}$; there are $\frac{l-1}2 $ elements $i\in H_2^{(1)}$ such that $i-k\in lH_1^{(1)}$; there are $\frac{l(l-1)}2-l$ elements $i\in H_2^{(1)}$ such that $i-k\in (\Bbb Z/ l^2\Bbb Z)^*$. On the other hand there are $\frac{l(l-1)}2$  elements $i\in H_2^{(0)}$ such that $i-k\in (\Bbb Z/l^2\Bbb Z)^*$.
Hence
\begin{eqnarray*}I_1^{(0)}&=&\sum_{i\in l(\Bbb Z/l^2\Bbb Z)}\psi(\beta^{i-k})\frac{l-1}2+\sum_{i\in (\Bbb Z/l^2\Bbb Z)^*}\psi(\beta^{i-k})\sum_{j\in lH_1^{(0)}}\zeta_N^{-ij}\\
&=& \frac {l(l-1)}2+\zeta_p^{l(l-1)/2}\frac{-1+\sqrt{-l}}2+\frac{l-1}2\zeta_p^{-l(1+\varepsilon)}\frac{-1+\sqrt{-l}}2+\frac{l-1}2\zeta_p^{l\varepsilon}\frac{-1+\sqrt{-l}}2
\\&+&\frac{l(l-3)}2\frac{-1+\sqrt{-l}}2+\frac{l(l-1)}2\frac{-1-\sqrt{-l}}2\\
&=&l\frac{1-\sqrt{-l}}2+\frac{(l-1)(-1+\sqrt{-l})}4(\zeta_p^{l\varepsilon}+\zeta_p^{-l(1+\varepsilon)})+\frac{-1+\sqrt{-l}}2\zeta_p^{l(l-1)/2},
\end{eqnarray*}
\begin{eqnarray*}I_1^{(1)}&=&\sum_{i\in l(\Bbb Z/l^2\Bbb Z)}\psi(\beta^{i-k})\frac{l-1}2+\sum_{i\in (\Bbb Z/l^2\Bbb Z)^*}\psi(\beta^{i-k})\sum_{j\in lH_1^{(1)}}\zeta_N^{-ij}\\
&=& \frac {l(l-1)}2+\zeta_p^{l(l-1)/2}\frac{-1-\sqrt{-l}}2+\frac{l-1}2\zeta_p^{-l(1+\varepsilon)}\frac{-1-\sqrt{-l}}2\\ &+&\frac{l-1}2\zeta_p^{l\varepsilon}\frac{-1-\sqrt{-l}}2
+\frac{l(l-3)}2\frac{-1-\sqrt{-l}}2+\frac{l(l-1)}2\frac{-1+\sqrt{-l}}2\\
&=&l\frac{1+\sqrt{-l}}2+\frac{(l-1)(-1-\sqrt{-l})}4(\zeta_p^{l\varepsilon}+\zeta_p^{-l(1+\varepsilon)})+\frac{-1-\sqrt{-l}}2\zeta_p^{l(l-1)/2}.
\end{eqnarray*}

Similarly,
$I_2^{(0)}=  I_2^{(1)}=0.$

This completes the proof.
\end{proof}

Suppose that  $-l \equiv 1\pmod p$.  Similar to the proof of Lemma 3.4, the following result follows and the proof is omitted here.

\begin{lem} Suppose that $-l{\equiv} 1\pmod p$  and $\sqrt {-l}\equiv 1\pmod{\mathcal P_1}$.  Let $b\ne 0$ and $b^{\frac{q-1}N}=\beta^k$, $0\le k\le N-1$. Then
$$I_0=\frac{l+1}2(\zeta_p-1)+l^2.$$
(1) If $k=0$, then
$$I_1^{(0)}=I_1^{(1)}=\frac{l^2-1}4(\zeta_p-1),I_2^{(0)}=\frac{l(l-1)}2\frac{1+\sqrt{-l}}2(\zeta_p-1),$$
$$I_2^{(1)}=\frac{l(l-1)}2\frac{1-\sqrt{-l}}2(\zeta_p-1).$$
(2) If $k\in lH_1^{(0)}$, then
$$I_1^{(0)}=I_1^{(1)}=\frac{l^2-1}4(\zeta_p-1),I_2^{(0)}=l(\frac{1+\sqrt{-l}}2)^2(1-\zeta_p),$$
$$I_2^{(1)}=l(\frac{1-\sqrt{-l}}2)^2(1-\zeta_p).$$
(3) If $k\in lH_1^{(1)}$, then
$$I_1^{(0)}=I_1^{(1)}=\frac{l^2-1}4(\zeta_p-1),I_2^{(0)}=I_2^{(1)}=\frac{l^2+l}4(1-\zeta_p),$$
(4) If $k\in H_2^{(0)}$, then
$$I_1^{(0)}=\frac{(l+1)(1+\sqrt{-l})}4(1-\zeta_p),I_1^{(1)}=\frac{(l+1)(1-\sqrt{-l})}4(1-\zeta_p), I_2^{(0)}=I_2^{(1)}=0.$$
(5) If $k\in H_2^{(1)}$, then
$$I_1^{(0)}=\frac{(l+1)(1-\sqrt{-l})}4(1-\zeta_p),I_1^{(1)}=\frac{(l+1)(1+\sqrt{-l})}4(1-\zeta_p),I_2^{(0)}=I_2^{(1)}=0.$$
\end{lem}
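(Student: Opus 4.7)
The plan is to mirror the proof of Lemma 3.4 almost verbatim, with the essential simplification coming from Lemma 3.2(2): under the hypothesis $-l\equiv 1\pmod p$ with $\sqrt{-l}\equiv 1\pmod{\mathcal P_1}$, the trace values $\Tr_{q/p}(\beta^i)$ take only two possible values, namely $1$ (when $i=0$ or $i\in lH_1^{(1)}$) and $0$ (otherwise). Consequently $\psi(\beta^i)\in\{1,\zeta_p\}$ throughout, which collapses many of the messy combinations appearing in Lemma 3.4 into linear expressions in $(\zeta_p-1)$.

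First I would compute $I_0=\sum_{i=0}^{N-1}\psi(\beta^{i-k})=\sum_{i=0}^{N-1}\psi(\beta^i)$ by splitting $S$ according to (3.1). There is one contribution of $\zeta_p$ from $i=0$, a contribution of $\frac{l-1}{2}$ from $i\in lH_1^{(0)}$, a contribution of $\frac{l-1}{2}\zeta_p$ from $i\in lH_1^{(1)}$, and $l(l-1)$ from $i\in H_2^{(0)}\cup H_2^{(1)}$; summing and regrouping yields $I_0=\frac{l+1}{2}(\zeta_p-1)+l^2$.

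Next, for each of the five cases on $k$, I would compute $I_1^{(0)},I_1^{(1)},I_2^{(0)},I_2^{(1)}$ using the same bookkeeping that powered Lemma 3.4. The two key inputs are reused verbatim: the inner character sums $\sum_{j\in lH_1^{(\varepsilon)}}\zeta_N^{-ij}$ and $\sum_{j\in H_2^{(\varepsilon)}}\zeta_N^{-ij}$ (with their values $\frac{l-1}{2}$, $\frac{-1\pm\sqrt{-l}}{2}$, $l\cdot\frac{-1\pm\sqrt{-l}}{2}$, $0$, depending on which coset $i$ lies in, as established inside the proof of Lemma 3.4), and the cyclotomic counts $(u,v)_2$ of Lemma 3.1 applied to the quadratic partition $lH_1^{(0)}\cup lH_1^{(1)}$ of $S_1$. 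For $k=0$ and $k\in lH_1^{(0)}\cup lH_1^{(1)}$ one observes that the shift by $k$ is absorbed into the index (either trivially, or via the cyclotomic numbers $\tfrac{l-3}{4}$ and $\tfrac{l+1}{4}$), while the trace values $\zeta_p^{l\varepsilon}$, $\zeta_p^{-l(1+\varepsilon)}$, $\zeta_p^{l(l-1)/2}$ of Lemma 3.4 all degenerate to $1$ or $\zeta_p$ here; this is why $I_1^{(0)}=I_1^{(1)}=\frac{l^2-1}{4}(\zeta_p-1)$ in cases (1)--(3). For $k\in H_2^{(0)}\cup H_2^{(1)}$, the sums over $j\in H_2^{(\varepsilon)}$ collapse via the same argument as equation (3.3) of Lemma 3.4 to give $I_2^{(0)}=I_2^{(1)}=0$, while $I_1^{(\varepsilon)}$ is obtained by counting how often $i-k$ lands in each of the five cosets of (3.1).

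The main obstacle will be bookkeeping, not ideas: one must correctly track which of the three values $\{0,1,\zeta_p\}$ each $\psi(\beta^{i-k})$ takes as $i$ ranges over $S_0,lH_1^{(0)},lH_1^{(1)},H_2^{(0)},H_2^{(1)}$, and pair it with the correct inner sum $\sum_{j\in\text{subset}}\zeta_N^{-ij}$. The symbolic simplification then reduces every answer to an integer (or half-integer with $\sqrt{-l}$) multiple of $(\zeta_p-1)$ or $(1-\zeta_p)$, matching the claimed formulas. Case (3) (under $\sqrt{-l}\equiv-1\pmod{\mathcal P_1}$) is handled identically, swapping the roles of $H_1^{(0)}$ and $H_1^{(1)}$ throughout; this is why the author states only the $\sqrt{-l}\equiv 1$ case explicitly.
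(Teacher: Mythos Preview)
Your proposal is correct and matches the paper's approach exactly: the paper omits the proof of this lemma entirely, stating only that ``similar to the proof of Lemma 3.4, the following result follows,'' which is precisely the strategy you outline. One tiny slip: $\psi(\beta^{i-k})$ takes values only in $\{1,\zeta_p\}$, never $0$, since $\psi$ is a character.
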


\begin{lem} Suppose that $-l{\equiv} 1\pmod p$  and $\sqrt {-l}\equiv -1\pmod{\mathcal P_1}$.  Let $b\ne 0$ and $b^{\frac{q-1}N}=\beta^k$, $0\le k\le N-1$. Then
$$I_0=\frac{l+1}2(\zeta_p-1)+l^2.$$
(1) If $k=0$, then
$$I_1^{(0)}=I_1^{(1)}=\frac{l^2-1}4(\zeta_p-1),I_2^{(0)}=\frac{l(l-1)}2\frac{1-\sqrt{-l}}2(\zeta_p-1),$$
$$I_2^{(1)}=\frac{l(l-1)}2\frac{1+\sqrt{-l}}2(\zeta_p-1).$$
(2) If $k\in lH_1^{(0)}$, then
$$I_1^{(0)}=I_1^{(1)}=\frac{l^2-1}4(\zeta_p-1),I_2^{(0)}=I_2^{(1)}=\frac{l^2+l}4(1-\zeta_p).$$
(3) If $k\in lH_1^{(1)}$, then
$$I_1^{(0)}=I_1^{(1)}=\frac{l^2-1}4(\zeta_p-1),I_2^{(0)}=l (\frac{1-\sqrt{-l}}2)^2(1-\zeta_p),$$
$$I_2^{(1)}=l (\frac{1+\sqrt{-l}}2)^2(1-\zeta_p).$$
(4) If $k\in H_2^{(0)}$, then
$$I_1^{(0)}=\frac{(l+1)(1+\sqrt{-l})}4(1-\zeta_p),I_1^{(1)}=\frac{(l+1)(1-\sqrt{-l})}4(1-\zeta_p),I_2^{(0)}=I_2^{(1)}=0.$$
(5) If $k\in H_2^{(1)}$, then
$$I_1^{(0)}=\frac{(l+1)(1-\sqrt{-l})}4(1-\zeta_p),I_1^{(1)}=\frac{(l+1)(1+\sqrt{-l})}4(1-\zeta_p),I_2^{(0)}=I_2^{(1)}=0.$$
\end{lem}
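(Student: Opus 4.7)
The plan is to follow the same pattern as the proof of Lemma 3.4, making the substitutions dictated by Lemma 3.2(3). Under the hypothesis $-l\equiv 1\pmod p$ with $\sqrt{-l}\equiv -1\pmod{\mathcal{P}_1}$, Lemma 3.2(3) gives $\psi(\beta^i)=\zeta_p$ if $i=0$ or $i\in lH_1^{(0)}$, and $\psi(\beta^i)=1$ otherwise. So the first step is to compute $I_0=\sum_{i=0}^{N-1}\psi(\beta^i)$ by counting: exactly $\frac{l+1}{2}$ indices contribute $\zeta_p$ (namely $i=0$ and $i\in lH_1^{(0)}$) and the remaining $l^2-\frac{l+1}{2}$ indices contribute $1$, yielding $I_0=\frac{l+1}{2}(\zeta_p-1)+l^2$.

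For each of the five cases of $k$, I would then evaluate $I_1^{(0)},I_1^{(1)},I_2^{(0)},I_2^{(1)}$ by splitting the outer sum according to the decomposition $S=S_0\cup lH_1^{(0)}\cup lH_1^{(1)}\cup H_2^{(0)}\cup H_2^{(1)}$. The inner sums $\sum_{j\in lH_1^{(\epsilon)}}\zeta_N^{-ij}$ and $\sum_{j\in H_2^{(\epsilon)}}\zeta_N^{-ij}$ do not depend on the present hypothesis and take the same values as in Lemma 3.4: they equal $\tfrac{l-1}{2}$, $0$, $\tfrac{-1\pm\sqrt{-l}}{2}$, or $l\cdot\tfrac{-1\pm\sqrt{-l}}{2}$ depending on whether $i$ lies in $l(\Bbb Z/l^2\Bbb Z)$ or $(\Bbb Z/l^2\Bbb Z)^*$ and on the residue of $i$ modulo $l$. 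The coefficients $\psi(\beta^{i-k})$ are then $\zeta_p$ or $1$ depending only on which of the five cosets contains $i-k$, and Lemma 3.1 supplies the required counts: e.g.\ there is a unique $i\in H_2^{(0)}$ with $i-k=0$, $\tfrac{l-1}{2}$ elements with $i-k\in lH_1^{(0)}$, and so on, exactly as already tabulated in the proof of Lemma 3.4.

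A useful shortcut is the symmetry with Lemma 3.5. Swapping the hypothesis $\sqrt{-l}\equiv 1\pmod{\mathcal P_1}$ for $\sqrt{-l}\equiv -1\pmod{\mathcal P_1}$ moves the trace support from $\{0\}\cup lH_1^{(1)}$ to $\{0\}\cup lH_1^{(0)}$, so the formulas of Lemma 3.6 should arise from those of Lemma 3.5 by interchanging cases (2)$\leftrightarrow$(3) (and relabelling (4)$\leftrightarrow$(5) for $I_1^{(\epsilon)}$) together with the substitution $\sqrt{-l}\mapsto-\sqrt{-l}$ on the right-hand sides. A direct spot check confirms this: e.g.\ Lemma 3.6 case (3) gives $I_2^{(0)}=l\bigl(\tfrac{1-\sqrt{-l}}{2}\bigr)^{2}(1-\zeta_p)$, which matches Lemma 3.5 case (2) under this swap, and the matching can be used to derive half of the formulas with no additional computation.

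The main obstacle is purely computational bookkeeping: twenty linear combinations of $\zeta_p$ and $1$ must be expanded using the cyclotomic numbers $(0,0)_2=(1,0)_2=(1,1)_2=\tfrac{l-3}{4}$ and $(0,1)_2=\tfrac{l+1}{4}$ from Lemma 3.1, and then simplified to the claimed closed forms. Expressions such as $\tfrac{-1-\sqrt{-l}}{2}\cdot\tfrac{l^2-2l-1}{2}+\tfrac{l(l-1)}{2}\cdot\tfrac{1+\sqrt{-l}}{2}$ must be recognized as $\tfrac{(l+1)(1+\sqrt{-l})}{4}$ by direct algebraic simplification; exploiting the symmetry with Lemma 3.5 effectively halves the amount of such work.
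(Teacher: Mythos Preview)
Your proposal is correct and coincides with the paper's approach: the paper simply states that Lemmas~3.5 and~3.6 follow ``similar to the proof of Lemma~3.4'' and omits the details, so the direct recomputation you outline using Lemma~3.2(3) in place of Lemma~3.2(1) is exactly what the paper intends. Your symmetry observation with Lemma~3.5 is an extra shortcut not mentioned in the paper; it can be made rigorous via the substitution $i\mapsto -i$, $j\mapsto -j$ (using $-1\in H_1^{(1)}$), which yields $I_s^{(\epsilon)}{}_{\text{Lem 3.6}}(k)=I_s^{(1-\epsilon)}{}_{\text{Lem 3.5}}(-k)$ and hence swaps cases (2)$\leftrightarrow$(3), (4)$\leftrightarrow$(5) and $\epsilon\leftrightarrow 1-\epsilon$, matching your sign-flip rule on $\sqrt{-l}$.
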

By (3.2) and Lemma 3.4,  we have the following  theorem.

\begin{thm} Suppose that $ l\equiv 3\pmod 4$, $l\ne 3$,  be a prime. Let  $N=l^2$ be an integer and $f=\frac{l(l-1)}2$ be   the smallest positive integer such that $p^f\equiv 1 \pmod N$. Let $q=p^{f}$  for $\gcd(p,f)=1$.  Suppose that  $-l{\not\equiv} 1\pmod p$. Then the Walsh spectrum of    $f(x)=\Tr_{q/p}(x^{\frac{q-1}{N}})$ is given as follows:
\begin{eqnarray*}
Value&& Frequency\\
\frac{l-1}2(A+B+lp^{\frac{f-h}2}a) && \mbox{occurs $1$ time,} \\
1+\frac1N(-\Delta_1+A\Delta_5+B\Delta_6)&&
\mbox{occurs $\frac{(l-1)(q-1)}{2l}$ times,}\\
1+\frac1N(-\Delta_1+A\Delta_6+B\Delta_5)&&
\mbox{occurs $\frac{(l-1)(q-1)}{2l}$ times,}\\
1+\frac1N(-1+\frac{l-1}2(A+B))\Delta_1 &&
\mbox{occurs $\frac{q-1}{l^2}$ times,}\\ +\frac{l-1}{2l}p^{\frac{f-h}2}\Delta_2-\frac{l-1}{2l}(A+B+2) && \\
1+\frac1N(-1+\frac{l-1}2(A+B))\Delta_1&&
\mbox{occurs $\frac{(l-1)(q-1)}{2l^2}$ times,}\\+\frac{1}{l}p^{\frac{f-h}2}\Delta_3-\frac{l-1}{2l}(A+B+2)&&\\
1+\frac1N(-1+\frac{l-1}2(A+B))\Delta_1&&
\mbox{occurs $\frac{(l-1)(q-1)}{2l^2}$ times,}\\+\frac{1}{l}p^{\frac{f-h}2}\Delta_4-\frac{l-1}{2l}(A+B+2) &&
\end{eqnarray*}
where $$\Delta_1=\zeta_p^{\frac{l(l-1)}2}+\frac{l-1}2\zeta_p^{l\varepsilon}+\frac{l-1}2\zeta_p^{-l(1+\varepsilon)},$$
$$\Delta_2=a\zeta_p^{\frac{l(l-1)}2}+\frac{-a+bl}2\zeta_p^{l\varepsilon}+\frac{-a-bl}2\zeta_p^{-l(1+\varepsilon)},$$
$$\Delta_3=\frac{-a+bl}2\zeta_p^{\frac{l(l-1)}2}+\frac{a-al-2bl}4\zeta_p^{l\varepsilon}+\frac{a(l+1)}4\zeta_p^{-l(1+\varepsilon)},$$
$$\Delta_4=\frac{-a-bl}2\zeta_p^{\frac{l(l-1)}2}+\frac{a(l+1)}4\zeta_p^{l\varepsilon}+\frac{a-al+2bl}4\zeta_p^{-l(1+\varepsilon)},$$
$$\Delta_5\frac{-1+\sqrt{-l}}2\zeta_p^{\frac{l(l-1)}2}+\frac{(l-1)(-1+\sqrt{-l})}4(\zeta_p^{l\varepsilon}+\zeta_p^{-l(1+\varepsilon)})+\frac{l(1-\sqrt{-l})}2,$$
$$\Delta_6=\frac{-1-\sqrt{-l}}2\zeta_p^{\frac{l(l-1)}2}+\frac{(l-1)(-1-\sqrt{-l})}4(\zeta_p^{l\varepsilon}+\zeta_p^{-l(1+\varepsilon)})+\frac{l(1+\sqrt{-l})}2,$$
$A=p^{\frac{f-hl}2}(\frac{a+b\sqrt{-l}}2)^l$, $B=p^{\frac{f-hl}2}(\frac{a-b\sqrt{-l}}2)^l$, $h$ is the ideal class number of $\mathbb{Q}(\sqrt{-l})$, and $a,b$ are integers given by (2.3).

\end{thm}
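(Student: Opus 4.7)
\bigskip

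\textbf{Proof plan.} The plan is to read off $\widehat{f}(b)$ case by case from formula (3.2) using the explicit evaluations of $I_0,I_1^{(0)},I_1^{(1)},I_2^{(0)},I_2^{(1)}$ provided by Lemma 3.4, and then count how many $b\in\Bbb F_q$ fall into each case. For $b=0$ we invoke Lemma 3.3 directly; this immediately produces the value $\frac{l-1}{2}(A+B+lp^{(f-h)/2}a)$ with frequency $1$, upon identifying $A=p^{(f-hl)/2}\bigl(\tfrac{a+b\sqrt{-l}}{2}\bigr)^l$ and $B=p^{(f-hl)/2}\bigl(\tfrac{a-b\sqrt{-l}}{2}\bigr)^l$. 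For $b\neq 0$ the assignment $b\mapsto k$ defined by $b^{(q-1)/N}=\beta^k$ partitions $\Bbb F_q^*$ into cyclotomic cosets $C_k^{(N,q)}$ each of cardinality $(q-1)/N=(q-1)/l^2$, so the five residue classes $k=0$, $k\in lH_1^{(0)}$, $k\in lH_1^{(1)}$, $k\in H_2^{(0)}$, $k\in H_2^{(1)}$ contribute $(q-1)/l^2$, $(l-1)(q-1)/(2l^2)$, $(l-1)(q-1)/(2l^2)$, $(l-1)(q-1)/(2l)$, $(l-1)(q-1)/(2l)$ values of $b$ respectively. These multiplicities match those claimed in the theorem.

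Next, I would substitute the data of Lemma 3.4 into (3.2) and reorganize. In all of cases (1)--(3) one has $I_1^{(0)}=I_1^{(1)}=\tfrac{l-1}{2}\Delta_1-\tfrac{l(l-1)}{2}$, and combining with $I_0=\Delta_1+l(l-1)$ gives a common contribution $\tfrac{1}{N}\bigl(-1+\tfrac{l-1}{2}(A+B)\bigr)\Delta_1-\tfrac{l-1}{2l}(A+B+2)$. The difference between the three cases resides entirely in $\tfrac{p^{(f-h)/2}}{N}\bigl(\tfrac{a+b\sqrt{-l}}{2}\,I_2^{(0)}+\tfrac{a-b\sqrt{-l}}{2}\,I_2^{(1)}\bigr)$, and the identities $\tfrac{a+b\sqrt{-l}}{2}+\tfrac{a-b\sqrt{-l}}{2}=a$, $\tfrac{a+b\sqrt{-l}}{2}-\tfrac{a-b\sqrt{-l}}{2}=b\sqrt{-l}$ collapse the $k=0$, $k\in lH_1^{(0)}$ and $k\in lH_1^{(1)}$ sub-cases to $\tfrac{l-1}{2l}p^{(f-h)/2}\Delta_2$, $\tfrac{1}{l}p^{(f-h)/2}\Delta_3$ and $\tfrac{1}{l}p^{(f-h)/2}\Delta_4$, respectively. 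For cases (4) and (5), Lemma 3.4 gives $I_2^{(0)}=I_2^{(1)}=0$, so the whole $I_2$-contribution vanishes and the remaining terms assemble into $1+\tfrac{1}{N}(-\Delta_1+A\Delta_5+B\Delta_6)$ or the analogue with $\Delta_5,\Delta_6$ swapped, once the computed $I_1^{(0)},I_1^{(1)}$ are compared with the definitions of $\Delta_5,\Delta_6$.

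The principal difficulty is bookkeeping rather than any conceptual subtlety: each case requires correctly identifying which symmetric combinations of $\tfrac{a\pm b\sqrt{-l}}{2}$, $\tfrac{-1\pm\sqrt{-l}}{2}$, and $\bigl(\tfrac{1\pm\sqrt{-l}}{2}\bigr)^{2}$ appear, and then verifying that the coefficients of $\zeta_p^{l(l-1)/2}$, $\zeta_p^{l\varepsilon}$ and $\zeta_p^{-l(1+\varepsilon)}$ reproduce exactly the $\Delta_i$ templates stated in the theorem. One routine but essential check is that in cases (4)--(5) the constant $l(l-1)/2$ terms inside $I_1^{(0)}$ and $I_1^{(1)}$ absorb the contribution $-I_0/N$ cleanly, which is where the identity used in line (3.3) of the proof of Lemma 3.4 becomes decisive. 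Once these algebraic identifications are in place, substituting back into (3.2) and pairing with the frequency count above yields the six lines of the value/frequency table.
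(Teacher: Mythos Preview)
Your proposal is correct and follows exactly the route the paper takes: the paper simply states ``By (3.2) and Lemma 3.4, we have the following theorem,'' and your plan of invoking Lemma 3.3 for $b=0$, substituting Lemma 3.4 case by case into (3.2) for $b\neq 0$, and reading off the frequencies from the coset sizes $|S_0|,|lH_1^{(0)}|,|lH_1^{(1)}|,|H_2^{(0)}|,|H_2^{(1)}|$ is precisely that. Your identification of the common block $\tfrac{1}{N}\bigl(-1+\tfrac{l-1}{2}(A+B)\bigr)\Delta_1-\tfrac{l-1}{2l}(A+B+2)$ in cases (1)--(3) and the recognition that $I_1^{(0)},I_1^{(1)}$ in cases (4)--(5) coincide with $\Delta_6,\Delta_5$ are the key algebraic observations needed.
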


By (3.2) and Lemmas  3.5 and 3.6, we obtain  Theorem 3.8.
\begin{thm}Suppose that $ l\equiv 3\pmod 4$, $l\ne 3$,  be a prime. Let  $N=l^2$ be an integer and $f=\frac{l(l-1)}2$ be   the smallest positive integer such that $p^f\equiv 1 \pmod N$. Let $q=p^{f}$  for $\gcd(p,f)=1$.
Suppose that  $-l{\equiv} 1\pmod p$. Then the Walsh spectrum of    $f(x)=\Tr_{q/p}(x^{\frac{q-1}{N}})$ is given as follows:
\begin{eqnarray*}
Value&& Frequency\\
\frac{l-1}2(A+B+lp^{\frac{f-h}2}a) && \mbox{occurs $1$ time,} \\
\frac{l+1}{2N}(1-\zeta_p)(1+\frac{1+\sqrt{-l}}2A+\frac{1-\sqrt{-l}}2B)&&
\mbox{occurs $\frac{(l-1)(q-1)}{2l}$ times,}\\
\frac{l+1}{2N}(1-\zeta_p)(1+\frac{1-\sqrt{-l}}2A+\frac{1+\sqrt{-l}}2B)&&
\mbox{occurs $\frac{(l-1)(q-1)}{2l}$ times,}\\
\frac1{2N}(1-\zeta_p)(l+1-\frac{l^2-1}2 (A+B)
-\frac{(l-1)(a+\delta bl)}2lp^{\frac{f-h}2})&&
\mbox{occurs $\frac{q-1}{l^2}$ times,} \\
\frac1{2N}(1-\zeta_p)(l+1-\frac{l^2-1}2 (A+B)
+\frac{a-al+2\delta bl}2lp^{\frac{f-h}2})&&
\mbox{occurs $\frac{(l-1)(q-1)}{2l^2}$ times,}\\
\frac1{2N}(1-\zeta_p)(l+1-\frac{l^2-1}2 (A+B)
+\frac{a(l+1)}2lp^{\frac{f-h}2})&&
\mbox{occurs $\frac{(l-1)(q-1)}{2l^2}$ times,}
\end{eqnarray*}
where   $A=p^{\frac{f-hl}2}(\frac{a+b\sqrt{-l}}2)^l$, $B=p^{\frac{f-hl}2}(\frac{a-b\sqrt{-l}}2)^l$, $h$ is the ideal class number of $\mathbb{Q}(\sqrt{-l})$,  $a,b$ are integers given by (2.3), and $$\delta=\left\{\begin{array}{lll}
-1, &\mbox{ if } \sqrt {-l}\equiv 1\pmod{\mathcal P_1},\\
1,&\mbox{ if } \sqrt {-l}\equiv -1\pmod{\mathcal P_1}.\\
 \end{array}\right.$$
\end{thm}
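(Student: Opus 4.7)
The plan is to apply the master formula (3.2)
\[
\widehat f(b)=1+\frac{-I_{0}}{N}+\frac{p^{(f-hl)/2}}{N}\Bigl(\bigl(\tfrac{a+b\sqrt{-l}}{2}\bigr)^{l}I_{1}^{(0)}+\bigl(\tfrac{a-b\sqrt{-l}}{2}\bigr)^{l}I_{1}^{(1)}\Bigr)+\frac{p^{(f-h)/2}}{N}\Bigl(\tfrac{a+b\sqrt{-l}}{2}I_{2}^{(0)}+\tfrac{a-b\sqrt{-l}}{2}I_{2}^{(1)}\Bigr)
\]
separately in the two subcases $\sqrt{-l}\equiv 1\pmod{\mathcal P_{1}}$ and $\sqrt{-l}\equiv -1\pmod{\mathcal P_{1}}$, feeding in the character-sum values tabulated in Lemma~3.5 and Lemma~3.6 respectively. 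For the value at $b=0$, we simply quote Lemma~3.3 (which is not affected by the sign ambiguity of $\sqrt{-l}$), producing the single occurrence of $\frac{l-1}{2}(A+B+lp^{(f-h)/2}a)$.

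For $b\ne 0$ I would split according to which of the five classes $\{0\},\,lH_{1}^{(0)},\,lH_{1}^{(1)},\,H_{2}^{(0)},\,H_{2}^{(1)}$ contains the exponent $k$ with $b^{(q-1)/N}=\beta^{k}$. The frequencies follow from the decomposition~(3.1) together with the fact that the map $b\mapsto b^{(q-1)/N}$ has fibres of size $(q-1)/N=(q-1)/l^{2}$: namely $k=0$ occurs $(q-1)/l^{2}$ times, each of $lH_{1}^{(0)}$ and $lH_{1}^{(1)}$ contributes $(l-1)(q-1)/(2l^{2})$, and each of $H_{2}^{(0)}$ and $H_{2}^{(1)}$ contributes $(l-1)(q-1)/(2l)$, matching the frequency column. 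Within each class I would substitute the $I$-values from Lemma~3.5 (resp.\ 3.6) into (3.2); since $\sqrt{-l}$ enters only through the coefficients $\tfrac{a\pm b\sqrt{-l}}{2}$ and through the $I$'s, the two subcases differ exactly by swapping the roles of $\sqrt{-l}$ and $-\sqrt{-l}$ in the $I_{2}^{(\cdot)}$ expressions, which is precisely what the sign $\delta=\mp 1$ in the theorem statement encodes.

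The main obstacle—and the real work—is the algebraic simplification inside each class. Because the $I_{1}^{(\cdot)}$ values in Lemmas~3.5 and 3.6 already appear in the form (constant)$\cdot(1-\zeta_{p})$, and because $-l\equiv 1\pmod p$ allows one to collapse sums like $\zeta_{p}^{l(l-1)/2}+\frac{l-1}{2}\zeta_{p}^{l\varepsilon}+\frac{l-1}{2}\zeta_{p}^{-l(1+\varepsilon)}$ down to $\frac{l+1}{2}(\zeta_{p}-1)+l^{2}$ (this is $I_{0}$), the contributions from $I_{0}$, $I_{1}^{(0)}$ and $I_{1}^{(1)}$ can be gathered into the common factor $(1-\zeta_{p})/(2N)$; I would verify the resulting coefficients of $A=p^{(f-hl)/2}(\tfrac{a+b\sqrt{-l}}{2})^{l}$ and $B=p^{(f-hl)/2}(\tfrac{a-b\sqrt{-l}}{2})^{l}$ class by class. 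The $I_{2}^{(\cdot)}$ contributions, which carry the extra factor $p^{(f-h)/2}$, must be combined with the coefficients $\tfrac{a\pm b\sqrt{-l}}{2}$; here the identity $(\tfrac{1+\sqrt{-l}}{2})^{2}\cdot\tfrac{a-b\sqrt{-l}}{2}+(\tfrac{1-\sqrt{-l}}{2})^{2}\cdot\tfrac{a+b\sqrt{-l}}{2}$ and its variants must be simplified to produce the three distinct real forms $-(a+\delta bl)$, $(a-al+2\delta bl)/2$, $a(l+1)/2$ appearing in the last three lines of the table.

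Finally, for the classes $k\in H_{2}^{(0)}$ and $k\in H_{2}^{(1)}$ the vanishing $I_{2}^{(0)}=I_{2}^{(1)}=0$ (both lemmas) eliminates the $p^{(f-h)/2}$-term, and the two subcases ($\sqrt{-l}\equiv\pm 1$) give rise to the two values labelled by $\tfrac{1\pm\sqrt{-l}}{2}$; I would check that the total multiplicity $\frac{(l-1)(q-1)}{2l}$ is preserved and that no further amalgamation of values with the remaining classes occurs (which is guaranteed by the linear independence of $\sqrt{-l}$ over $\mathbb Q$ and the distinctness of the $p$-power coefficients). Stringing these computations together yields exactly the six lines of the theorem.
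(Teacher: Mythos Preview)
Your proposal is correct and follows exactly the paper's own approach: the paper's proof of Theorem~3.8 is the single sentence ``By (3.2) and Lemmas 3.5 and 3.6, we obtain Theorem~3.8,'' and you have spelled out precisely that route---invoking Lemma~3.3 for $b=0$, feeding the $I$-values from Lemmas~3.5/3.6 into~(3.2) class by class, and reading off the frequencies from the fibre sizes of $b\mapsto b^{(q-1)/N}$ over the decomposition~(3.1). The algebraic simplifications you describe are the (omitted) computational content of the paper's one-line proof.
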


Let $l$ is a prime congruent to $3$ modulo $4$, $l\neq 3$.  Suppose that  $1+l=4p^h$, where $h$ is the ideal class number of $\mathbb{Q}(\sqrt{-l})$.  Then in (2.3), $a,b$ take  $\pm  1$. By $a^2 + lb^2 = 4p^h $, $[(\mathbb{Z}/l\mathbb{Z})^\ast: \langle p\rangle] = 2$, and $a\equiv -2p^{\frac{l-1+2h}4} \pmod l$,   $$a=\left\{\begin{array}{lll}
1, &\mbox{ if } l\equiv 3\pmod 8,\\
-1,&\mbox{ if } l\equiv 7\pmod 8.\\
 \end{array}\right.$$
 While $b$ can be determined up to sign.

In  Theorem 3.8, suppose that $\sqrt {-l}\equiv 1\pmod{\mathcal P_1}$ for some $p$. If  $l\equiv 3\pmod 8$, then take $b=-1$; if  $l\equiv 7\pmod 8$, then take $b=1$.  Suppose that $\sqrt {-l}\equiv -1\pmod{\mathcal P_1}$ for some $p$. If  $l\equiv 3\pmod 8$, then  take $b=1$; if  $l\equiv 7\pmod 8$, then take $b=-1$.

Without loss of generality,  we  give  Corollary 3.9 based on  one of above conditions.
\begin{cor} The notations are as Theorem 3.8. Suppose that $1+l=4p^h$ and $\sqrt {-l}\equiv 1\pmod{\mathcal P_1}$ for some prime $p$. If  $l\equiv 3\pmod 8$, then the Walsh spectrum of   function $f(x)=\Tr_{q/p}(x^{\frac{q-1}{N}})$ is given as follows:
\begin{eqnarray*}
Value&& Frequency\\
\frac{l-1}2(A+B+lp^{\frac{f-h}2}) && \mbox{occurs $1$ time,} \\
\frac{l+1}{2N}(1-\zeta_p)(1+\frac{1+\sqrt{-l}}2A+\frac{1-\sqrt{-l}}2B)&&
\mbox{occurs $\frac{(l-1)(q-1)}{2l}$ times,}\\
\frac{l+1}{2N}(1-\zeta_p)(1+\frac{1-\sqrt{-l}}2A+\frac{1+\sqrt{-l}}2B)&&
\mbox{occurs $\frac{(l-1)(q-1)}{2l}$ times,}\\
\frac{l+1}{4N}(1-\zeta_p)(2-(l-1) (A+B+lp^{\frac{f-h}2}))&&
\mbox{occurs $\frac{q-1}{l^2}$ times,} \\
\frac{l+1}{4N}(1-\zeta_p)(2-(l-1) (A+B)+ lp^{\frac{f-h}2})&&
\mbox{occurs $\frac{(l-1)(q-1)}{l^2}$ times,}
\end{eqnarray*}
where   $A=p^{\frac{f-hl}2}(\frac{1-\sqrt{-l}}2)^l$,  $B=p^{\frac{f-hl}2}(\frac{1+\sqrt{-l}}2)^l$, and $h$ is the ideal class number of $\mathbb{Q}(\sqrt{-l})$.
\end{cor}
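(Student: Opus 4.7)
The plan is to deduce Corollary 3.9 directly from Theorem 3.8 by substituting the special values of the integers $a,b$ (and of the sign $\delta$) that the hypothesis $1+l=4p^{h}$ forces, and then simplifying the six rows of the Walsh spectrum table. No new Gaussian sum calculation is required; the whole argument is a specialization.

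First I would fix the integers $a,b$ in (2.3). From $a^{2}+lb^{2}=4p^{h}=1+l$, one must have $a^{2}=1$ and $b^{2}=1$, so $a,b\in\{\pm 1\}$. The congruence $a\equiv -2p^{(l-1+2h)/4}\pmod l$ combined with $[(\mathbb{Z}/l\mathbb{Z})^{*}:\langle p\rangle]=2$ pins down $a$: when $l\equiv 3\pmod 8$ one obtains $a=1$, as already recorded in the paragraph preceding the corollary. The sign of $b$ is chosen consistently with the assumption $\sqrt{-l}\equiv 1\pmod{\mathcal P_{1}}$, namely $b=-1$ for $l\equiv 3\pmod 8$; correspondingly $\delta=-1$ in the notation of Theorem 3.8. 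These choices are the only substantive input.

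Next I would plug $a=1$, $b=-1$, $\delta=-1$ into the six formulas of Theorem 3.8. The first three rows are immediate: the top value becomes $\tfrac{l-1}{2}(A+B+lp^{(f-h)/2})$ (using $a=1$), while the next two rows involving $A$ and $B$ are already written in terms of $A,B,\sqrt{-l}$ and do not depend on the extra information beyond $N=l^{2}$. For the fourth row I would compute $a+\delta bl=1+(-1)(-1)l=1+l$, so that
\[
-\tfrac{(l-1)(a+\delta bl)}{2}\,l\,p^{(f-h)/2}=-\tfrac{(l-1)(l+1)}{2}\,l\,p^{(f-h)/2};
\]
factoring $(l+1)$ from the resulting bracket gives exactly $\tfrac{l+1}{4N}(1-\zeta_{p})\bigl(2-(l-1)(A+B+lp^{(f-h)/2})\bigr)$ as claimed. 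For the fifth row the coefficient $a-al+2\delta bl=1-l+2l=1+l$, and for the sixth row $a(l+1)=l+1$; thus the two coefficients coincide under the present specialization, and the two rows of Theorem 3.8 collapse into a single row of Corollary 3.9 whose frequency is the sum $\tfrac{(l-1)(q-1)}{2l^{2}}+\tfrac{(l-1)(q-1)}{2l^{2}}=\tfrac{(l-1)(q-1)}{l^{2}}$.

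The only real obstacle is the sign bookkeeping: one must be careful that the sign choice for $b$ is the one compatible with the prime ideal $\mathcal P_{1}=\langle\tfrac{a+b\sqrt{-l}}{2},p\rangle$ specified in Lemma 2.3 and with the embedding $\sqrt{-l}\equiv 1\pmod{\mathcal P_{1}}$, because opposite sign conventions would swap $A$ and $B$ and therefore swap the second and third rows of the table. Once this is fixed, the remaining work is routine algebraic manipulation, and the coincidence of the final two rows is a direct consequence of $\delta b=-b=1=a$.
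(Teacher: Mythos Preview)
Your proposal is correct and follows essentially the same approach as the paper: the paper derives Corollary 3.9 by observing (in the paragraph immediately preceding the corollary) that $1+l=4p^{h}$ forces $a,b\in\{\pm1\}$, that $l\equiv3\pmod 8$ gives $a=1$, and that the hypothesis $\sqrt{-l}\equiv1\pmod{\mathcal P_{1}}$ fixes $b=-1$ (hence $\delta=-1$), after which the six rows of Theorem 3.8 specialize exactly as you describe, with the last two rows coinciding and their frequencies adding. Your algebraic checks of the fourth, fifth, and sixth rows are accurate, and your remark that the only delicate point is the sign convention tying $b$ to $\mathcal P_{1}$ matches the paper's discussion.
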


\begin{exa} Let $p=2$ and $l=7$.   It is straightforward to check that
$\ord_{7^2} (2) = 21 = \frac{\phi(7^2)}2=f$. The class number $h$ of $\Bbb Q(\sqrt{-7})$ is  1 (see \cite[ P.514]{C}).  Therefore we indeed have $1+l=4p^h$
 in
this case. From Corollary 3.19,  the Walsh spectrum of $f(x)=\Tr_{2^{21}/2}(x^{\frac{2^{21}-1}{7}})$  over $\Bbb F_{2^{21}}$ is five-valued.
\end{exa}

\begin{exa} Let $p=3$ and $l=107$.  It is straightforward to check that
$\ord_{ 107^2} (3) = 5671 = \frac{\phi(107^2)}2=f$.  Let $q=3^{5671}$. The class number $ h$ of $\Bbb Q(\sqrt{-107})$
 is  3 (see \cite[P.514]{C}).  Therefore we also have $1+l=4p^h$
 in this case. From Corollary 3.9,  the Walsh spectrum of $f(x)=\Tr_{q/3}(x^{\frac{q-1}{107^2}})$ over $\Bbb F_q$ is five-valued.

\end{exa}

\section{Concluding remarks}

   In this paper, we  investigated the Walsh spectrum of $f(x)=\Tr_{q/p}(x^{\frac{q-1}{N}})$  over $\Bbb F_q$ in index two case, where $l\equiv 3\pmod 4$, $l\ne 3$, is a prime, $N=l^2$,   the  order of a prime $p$ modulo $N$ is $f=\frac{l(l-1)}2$,   and $q=p^f$. In particular, a class of monomial functions with five-valued Walsh spectrum are presented  if $l$ has the form $l+1=4p^h$, where $h$  is the ideal class number of $\mathbb{Q}(\sqrt{-l})$.

    In fact, the Walsh spectrum of $f(x)=\Tr_{q/p}(x^{\frac{q-1}{N}})$ for the general case $N=l^m$ and $m> 2$ can also be  settled  by the same  method presented in this paper. But the process is more complex and the Walsh spectrum may have many values.
\subsection*{Acknowledgments}
The authors are very grateful to the reviewers and the Editor for their valuable  suggestions that much improved the quality of this paper.

\end{document}